\newcommand{\nats}{\mathbb{N}}
\newcommand{\size}[1]{|#1|}
\renewcommand{\epsilon}{\varepsilon}
\renewcommand{\phi}{\varphi}
\newcommand{\myquot}[1]{``#1''}
\newcommand{\proj}[1]{\pi_1(#1)}
\newcommand{\set}[1]{\{#1\}}
\newcommand{\aut}{\mathcal{A}}
\newcommand{\autproduct}{\mathcal{P}}
\newcommand{\auttrack}{\mathcal{T}}
\newcommand{\autpow}{\mathrm{pow}}
\newcommand{\ops}{\mathrm{Ops}}
\newcommand{\inc}[1]{#1 := #1 + 1}
\newcommand{\reset}[1]{#1 := 0}
\newcommand{\maxx}[3]{#1 := \max(#2,#3)}
\newcommand{\val}{\nu}
\newcommand{\lbls}{\Lambda}
\newcommand{\flatten}{\mathrm{flat}}
\newcommand{\run}{\rho}
\newcommand{\delaygame}[1]{\Gamma\!_{f}(#1)}
\newcommand{\SigmaO}{\Sigma_O}
\newcommand{\SigmaI}{\Sigma_I}
\newcommand{\stratO}{\tau_O}
\newcommand{\stratI}{\tau_I}
\newcommand{\p}{p}
\newcommand{\exptime}{\textsc{Exptime}}
\newcommand{\resolve}{r}
\newcommand{\game}{\mathcal{G}}
\newcommand{\wit}[1]{W_{#1}}
\newcommand{\witmap}[1]{\resolve_{#1}}
\newcommand{\init}{v_I}
\newcommand{\curlyR}{\mathfrak{R}}
\newcommand{\dom}{\mathrm{dom}}
\newcommand{\pow}{{2^{Q_\autproduct}}}
\newcommand{\rep}{\mathrm{rep}}
\newcommand{\win}{\mathrm{Win}}
\newcommand{\eqop}{\equiv_{\mathrm{ops}}}
\newcommand{\eqword}{\equiv_\aut}
\newcommand{\eqclass}[1]{[#1]}
\newcommand{\eqclassword}[1]{[#1]_{\eqword}}
\newcommand{\quotient}{\hspace{-0pt}\slash\hspace{-2pt}\eqword}
\newcommand{\U}{\mathrm{U}}
\title{Delay Games with\\WMSO$+$U Winning Conditions\thanks{A short version appears in the proceedings of CSR 2015~\cite{Zimmermann15}.}}
\author{Martin Zimmermann\thanks{Supported by the DFG projects \myquot{TriCS} (ZI 1516/1-1) and \myquot{AVACS} (SFB/TR 14).}}
\institute{Reactive Systems Group, Saarland University, Germany\\
 \email{zimmermann@react.uni-saarland.de}}
\begin{document}

\maketitle

\begin{abstract}
\noindent Delay games are two-player games of infinite duration in which one player may delay her moves to obtain a lookahead on her opponent's moves. We consider delay games with winning conditions expressed in weak monadic second order logic with the unbounding quantifier, which is able to express (un)boundedness properties.

We show that it is decidable whether the delaying player has a winning strategy using bounded lookahead and give a doubly-exponential upper bound on the necessary lookahead. In contrast, we show that bounded lookahead is not always sufficient to win such a game.
\end{abstract}


\section{Introduction}
\label{intro}
Many of today's problems in computer science are no longer concerned with
programs that transform data and then terminate, but with non-terminating
reactive systems which have to interact with a possibly antagonistic
environment for an unbounded amount of time. The framework of infinite
two-player games is a powerful and flexible tool to verify and synthesize such
systems. The seminal theorem of B\"uchi and Landweber~\cite{BuechiLandweber69}
states that the winner of an infinite game on a finite arena with an $\omega$-regular
winning condition can be determined and a corresponding finite-state winning
strategy can be constructed effectively.

Ever since, this result was extended along different dimensions, e.g., the
number of players, the type of arena, the type of winning condition, the type
of interaction between the players (alternation or concurrency), zero-sum or
non-zero-sum, and complete or incomplete information. In this work, we consider
two of these dimensions, namely more expressive winning conditions and the
possibility for one player to delay her moves.

\paragraph{Delay Games.} In a delay game, one of the players can postpone her moves for some time,
thereby obtaining a lookahead on her opponent's moves. This allows her to
win some games which she loses without lookahead, e.g., if her first move
depends on the third move of her opponent. Nevertheless, there are
winning conditions that cannot be won with any finite lookahead, e.g., if her
first move depends on every move of her opponent. Delay
arises naturally when transmission of data in networks or components equipped
with buffers are modeled.

From a more theoretical point of view,
uniformization of relations by continuous functions~\cite{Thomas11,DBLP:conf/rex/ThomasL93,trakhtenbrot1973finite} can be expressed and
analyzed using delay games. We consider games in which two players pick letters from alphabets~$\SigmaI$
and $\SigmaO$, respectively, thereby producing two infinite sequences~$\alpha \in \SigmaI^\omega$
and $\beta\in \SigmaO^\omega$. Thus, a strategy for the second player induces a mapping
$\tau\colon \SigmaI^\omega\rightarrow\SigmaO^\omega$. It is winning for the
second player if $(\alpha,\tau(\alpha))$ is contained in the winning
condition~$L\subseteq \SigmaI^\omega\times \SigmaO^\omega$ for every $\alpha$. If  $\set{(\alpha,\tau(\alpha)) \mid \alpha \in \SigmaI^\omega} \subseteq L$, then $\tau$ uniformizes~$L$. 

In the classical setting of infinite games, in
which the players pick letters in alternation, the $n$-th letter of
$\tau(\alpha)$ depends only on the first $n$ letters of $\alpha$, i.e., $\tau$ satisfies a very strong notion of continuity. A strategy
with bounded lookahead, i.e., only finitely many
moves are postponed, induces a Lipschitz-continuous function $\tau$ (in the
Cantor topology on $\Sigma^\omega$) and a strategy with arbitrary lookahead
induces a continuous function (or equivalently, a uniformly continuous
function, as $\Sigma^\omega$ is compact).

Hosch and Landweber proved that it is decidable whether a game with $\omega$-regular
winning condition can be won with bounded lookahead~\cite{HoschLandweber72}. This result was improved by
Holtmann, Kaiser, and Thomas who showed that if a
player wins a game with arbitrary lookahead, then she wins already with
doubly-exponential bounded lookahead, and gave a streamlined decidability
proof yielding an algorithm with doubly-exponential running time~\cite{HoltmannKaiserThomas12}. Again, these
results were improved by giving an exponential upper bound on the necessary
lookahead and showing $\exptime$-completeness of the solution problem~\cite{KleinZimmermann15}. Going beyond $\omega$-regular winning
conditions by considering context-free conditions leads to undecidability and
non-elementary lower bounds on the necessary lookahead, even for very weak
fragments~\cite{FridmanLoedingZimmermann11}.

Thus, stated in terms of uniformization, Hosch and Landweber proved decidability of the
uniformization problem for $\omega$-regular relations by Lipschitz-continuous functions
and Holtmann et al.\ proved the equivalence of the existence of a
continuous uniformization function and the existence of a Lipschitz-continuous
uniformization function for $\omega$-regular relations. Furthermore, uniformization of context-free relations is undecidable, even with respect to Lip\-schitz-continuous functions.

In another line of work, Carayol and L\"oding considered the case of finite words~\cite{CarayolLoeding12}, and L\"oding and Winter~\cite{LoedingWinter14} considered the case of finite trees, which are both decidable. However, the nonexistence of MSO-definable choice functions on the infinite
binary tree~\cite{CarayolLoeding07,GS83} implies that uniformization fails for such trees.

\paragraph{WMSO$+$U.} In this work, we consider another class of conditions that go beyond the
$\omega$-regular ones. Recall that the $\omega$-regular languages are exactly
those that are definable in monadic second order logic (MSO)~\cite{Buechi62}. Recently, Boja\'{n}czyk has started a
program investigating the logic MSO$+$U, MSO extended with the unbounding
quantifier~$\U$. A formula~$\U X \phi(X)$ is satisfied, if there are
arbitrarily large \emph{finite} sets~$X$ such that $\phi(X)$ holds. MSO$+$U is
able to express all $\omega$-regular languages as well as non-regular languages like 
\[L = \set{ a^{n_0} b a^{n_1} b
a^{n_2} b \cdots \mid \limsup\nolimits_i n_i = \infty } \enspace .\]

Decidability of MSO$+$U turns out to be a delicate issue: there is no algorithm
that decides MSO$+$U on infinite trees and has a correctness proof using the
axioms of ZFC~\cite{BojanczykGMS14}. At the time of writing, an unconditional undecidability result for MSO$+$U on infinite words is presented~\cite{BPT15}. 

Even before these undecidability results were shown, much attention was being paid to fragments of the logic obtained by restricting the power of the second-order quantifiers. In particular, considering
weak\footnote{Here, the second-order quantifiers are restricted to finite
sets.} MSO with the unbounding quantifier (denoted by prepending a W) turned
out to be promising: WMSO$+$U on infinite words~\cite{Bojanczyk11} and on
infinite trees~\cite{BojanczykTorunczyk12} and WMSO$+$U with the path
quantifier (WMSO$+$UP) on infinite trees~\cite{Bojanczyk14} have equivalent automata models with decidable emptiness. Hence, these
logics are decidable.

For WMSO$+$U on infinite words, these automata are called max-automata,
deterministic automata with counters whose acceptance conditions are a boolean
combination of conditions~\myquot{counter~$c$ is bounded during the run}. While
processing the input, a counter may be incremented, reset to zero, or the
maximum of two counters may be assigned to it (hence the name max-automata). In
this work, we investigate delay games with winning conditions given by
max-automata, so-called max-regular conditions.

\paragraph{Our Contribution.} We prove the analogue of the Hosch-Landweber Theorem for
max-regular winning conditions: it is decidable whether the delaying player has
a winning strategy with bounded lookahead. Furthermore, we obtain a
doubly-exponential upper bound on the necessary lookahead, if this is the case.
Finally, we present a max-regular delay game such that the delaying player wins the game, but only with unbounded lookahead. Thus, unlike for $\omega$-regular conditions, bounded lookahead is not sufficient for max-regular conditions. These are, to the best of our knowledge, the first results on delay games with quantitative winning conditions.

WMSO$+$U is able to express many quantitative winning conditions studied in the literature, e.g., winning conditions in parameterized temporal logics like Prompt-LTL~\cite{KupfermanPitermanVardi09}, Parametric LTL~\cite{Zimmermann13}, or Parametric LDL~\cite{FaymonvilleZimmermann14}, finitary parity and Streett games~\cite{ChatterjeeHenzingerHorn09}, and parity and Streett games with costs~\cite{FZ14}. Thus, for all these conditions we can decide whether Player~$O$ wins a delay game with bounded lookahead.

Our proof consists of a reduction to a delay-free game with a max-regular winning condition. Such games can be solved by expressing them as a satisfiability
problem for WMSO$+$UP on infinite trees: the strategy of one player is an
additional labeling of the tree and a path quantifier is able to range over all
strategies of the opponent\footnote{See Example~1 in~\cite{Bojanczyk14} for
more details.}. The reduction itself is an extension of the one used in the $\exptime$-algorithm for delay games with $\omega$-regular winning
conditions~\cite{KleinZimmermann15} and is based on an equivalence relation that captures the behavior of the automaton recognizing the winning condition. However, unlike the relation used for $\omega$-regular conditions, ours is only correct if applied to words of bounded lengths. Thus, we can deal with bounded lookahead, but not with arbitrary lookahead.

\section{Definitions}
\label{sec_defs}
The set of non-negative integers is denoted by $\nats$. An alphabet $\Sigma$ is a non-empty finite set of letters, and $\Sigma^{*}$ ($\Sigma^n$, $\Sigma^{\omega}$) denotes the set of finite words (words of length $n$, infinite words) over $\Sigma$. The empty word is denoted by $\varepsilon$, the length of a finite word~$w$ by $|w|$. For $w\in \Sigma^{*}\cup\Sigma^{\omega}$ we write $w(n)$ for the $n$-th letter  of $w$.

\paragraph{Automata.}
Given a finite set~$C$ of counters storing non-negative integers,
\[
\ops(C) = \set{\inc{c}, \reset{c},  \maxx{c}{c_0}{c_1} \mid c,c_0, c_1 \in C}
\]
is the set of counter operations over $C$. A counter valuation over $C$ is a mapping~$\val\colon C \rightarrow \nats$. By $\val \pi$ we denote the counter valuation that is obtained by applying a finite sequence~$\pi \in \ops(C)^*$ of counter operations to $\val$, which is defined as implied by the operations' names.

A max-automaton~$\aut = (Q, C, \Sigma, q_I, \delta, \ell, \phi)$ consists of a finite set~$Q$ of states, a finite set~$C$ of counters, an input alphabet~$\Sigma$, an initial state~$q_I$, a (deterministic and complete) transition function~$\delta \colon Q \times \Sigma \rightarrow Q$, a transition labeling\footnote{Here, and later whenever convenient, we treat $\delta$ as relation~$\delta \subseteq Q \times \Sigma \times Q$.}~$\ell \colon \delta \rightarrow \ops(C)^*$ which labels each transition by a (possibly empty) sequence of counter operations, and an acceptance condition~$\phi$, which is a boolean formula over~$C$.

A run of $\aut$ on $\alpha \in \Sigma^\omega$ is an infinite sequence
\begin{equation}
\label{eq_run}
\rho = (q_0, \alpha(0), q_1) \,  (q_1, \alpha(1), q_2) \,  (q_2, \alpha(2), q_3) \cdots  \in \delta^\omega
\end{equation}
with $q_0 = q_I$. Partial (finite) runs on finite words are defined analogously, i.e., $(q_0, \alpha(0), q_1) \cdots  (q_{n-1}, \alpha(n-1), q_n)$ is the run of $\aut$ on $\alpha(0) \cdots \alpha(n-1)$ starting in $q_0$. We say that this run ends in $q_n$. As $\delta$ is deterministic, $\aut$ has a unique run on every finite or infinite word.

Let $\rho$ be as in~(\ref{eq_run}) and define $\pi_n = \ell(q_n, \alpha(n), q_{n+1})$, i.e., $\pi_n$ is the label of the $n$-th transition of $\rho$. Given an initial counter valuation~$\val$ and a counter $c \in C$, we define the sequence
\[ \rho_c = \val(c)\, , \, \val\pi_0(c)\, ,\,  \val\pi_0 \pi_1(c)\,  ,\,  \val\pi_0\pi_1\pi_2(c)\,,  \ldots  \]
of counter values of $c$ reached on the run after applying \emph{all} operations of a transition label.
The run~$\rho$ of $\aut$ on $\alpha$ is accepting, if the acceptance condition~$\phi$ is satisfied by the variable valuation that maps a counter~$c$ to true if and only if $\limsup \rho_c$ is finite. Thus, $\phi$ can intuitively be understood as a boolean combination of conditions~\myquot{$\limsup \rho_c < \infty$}. Note that the limit superior of $\rho_c$ is independent of the initial valuation used to define $\rho_c$, which is the reason it is not part of the description of $\aut$. We denote the language accepted by $\aut$ by $L(\aut)$ and say that it is max-regular.

A parity condition (say min-parity) can be expressed in this framework using a counter for each color that is incremented every time this color is visited and employing the acceptance condition to check that the smallest color whose associated counter is unbounded, is even. Hence, the class of $\omega$-regular languages is contained in the class of max-regular languages.

Given an automaton~$\aut$ over $\SigmaI\times\SigmaO$, we denote by $\proj{\aut}$ the automaton obtained by projecting each letter to its first component, which recognizes the projection of $L(\aut)$ to $\SigmaI$. 



\paragraph{Games with Delay.}
A delay function is a mapping $f\colon\nats\rightarrow \nats \setminus \set{0}$, which is said to be constant, if $f(i)=1$ for every $i > 0$. Given a delay function~$f$ and an $\omega$-language $L\subseteq \left(\SigmaI\times\SigmaO\right)^\omega$, the game $\delaygame{L}$ is played by two players (Player~$I$ and Player~$O$) in rounds $i=0,1,2,\ldots$ as follows: in round $i$, Player~$I$ picks a word $u_i\in\SigmaI^{f(i)}$, then Player~$O$ picks one letter $v_i\in\SigmaO$. We refer to the sequence $(u_0,v_0),(u_1,v_1),(u_2,v_2),\ldots$ as a play of $\delaygame{L}$, which yields two infinite words $\alpha=u_0u_1u_2\cdots$ and $\beta=v_0v_1v_2\cdots$. Player~$O$ wins the play if and only if the outcome~${\alpha(0)\choose \beta(0)}{\alpha(1)\choose \beta(1)}{\alpha(2)\choose \beta(2)}\cdots$ is in $L$, otherwise Player~$I$ wins.

Given a delay function $f$, a strategy for Player~$I$ is a mapping $\stratI\colon \SigmaO^*\rightarrow \SigmaI^*$ such that $|\stratI(w)|=f(|w|)$, and a strategy for Player~$O$ is a mapping $\stratO\colon \SigmaI^*\rightarrow \SigmaO$. Consider a play $(u_0,v_0),(u_1,v_1),(u_2,v_2),\ldots$ of $\delaygame{L}$. Such a play is consistent with $\stratI$, if $u_{i}=\stratI(v_0\cdots v_{i-1})$ for every $i$; it is consistent with $\stratO$, if $v_i=\stratO(u_0\cdots u_i)$ for every $i$. A strategy $\tau$ for Player~$\p$ is winning for her, if every play that is consistent with $\tau$ is won by Player~$\p$. In this case, we say Player~$p$ wins $\delaygame{L}$. A delay game is determined, if one of the players has a winning strategy.

\begin{theorem}
Delay games with max-regular winning conditions are determined.	
\end{theorem}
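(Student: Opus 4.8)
The plan is to reduce determinacy of delay games to Borel determinacy of a delay-free game on an infinite arena, exploiting the fact that a max-regular language is Borel (indeed it sits low in the Borel hierarchy). First I would fix a delay function $f$ and a max-regular $L = L(\aut) \subseteq (\SigmaI\times\SigmaO)^\omega$, recognized by a max-automaton $\aut$. The game $\delaygame{L}$ can be presented as a game on an explicit (countably infinite) arena: a position records the letters picked by Player~$I$ that have not yet been answered by Player~$O$ (a finite word over $\SigmaI$, whose length is bounded below by the cumulative delay minus the number of $O$-moves), together with whose turn it is. From a position where it is $I$'s turn in round $i$, $I$ moves to the position obtained by appending a block $u_i \in \SigmaI^{f(i)}$; from a position where it is $O$'s turn, $O$ consumes one buffered $\SigmaI$-letter by choosing $v_i \in \SigmaO$. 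An infinite play through this arena determines the pair $\alpha \in \SigmaI^\omega$, $\beta \in \SigmaO^\omega$ exactly as in the definition of a play of $\delaygame{L}$, and the winning condition for Player~$O$ is the set of plays whose outcome ${\alpha(0)\choose\beta(0)}{\alpha(1)\choose\beta(1)}\cdots$ lies in $L$.

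The key step is to argue that this winning set is Borel in the space of plays. Since the map sending a play to its outcome in $(\SigmaI\times\SigmaO)^\omega$ is continuous, it suffices to show that $L$ itself is Borel. This follows from the definition of acceptance by a max-automaton: the unique run $\run$ on $\alpha$ is a continuous function of $\alpha$, each counter sequence $\run_c$ is a continuous function of the run, and the condition ``$\limsup \run_c < \infty$'', i.e. $\exists k\, \forall n\ \run_c(n) \le k$, is a countable union of closed sets, hence $\Sigma^0_2$; its negation is $\Pi^0_2$. The acceptance condition $\phi$ is a fixed Boolean combination of finitely many such conditions, so $L$ is a finite Boolean combination of $\Sigma^0_2$ sets and therefore Borel (in fact in the difference hierarchy over $\Sigma^0_2$). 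Consequently the winning set of $\delaygame{L}$ is Borel.

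With Borelness in hand, Martin's Borel determinacy theorem applies to the delay-free game on the countable arena just described, yielding that one of the two players has a winning strategy there. It remains only to observe that strategies transfer faithfully between the two formulations: a strategy in the arena game is exactly a strategy for the corresponding player in $\delaygame{L}$ in the sense of the definitions above (a function from the opponent's move history to the next move, respecting the block-length constraint $|\stratI(w)| = f(|w|)$ for Player~$I$), and a play is consistent with such a strategy in one formulation iff the corresponding play is consistent with it in the other. Hence whichever player wins the arena game wins $\delaygame{L}$, which is the claim. The only point requiring a little care — and the step I would highlight as the main (mild) obstacle — is the placement of $L$ in the Borel hierarchy: one must be careful that the counter values are read off \emph{after} applying all operations on a transition, and that ``max'' assignments and resets do not disturb the $\Sigma^0_2$ form of the unboundedness predicate, which is immediate since only the limit superior of $\run_c$ matters and it is independent of the initial valuation.
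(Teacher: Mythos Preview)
Your argument is correct, and the paper itself remarks (immediately after its own proof) that determinacy also follows from a general Borel-determinacy result for delay games. However, the proof the paper actually gives takes a different, more hands-on route: rather than invoking Martin's theorem, it enriches the arena so that a vertex carries not only the round index, the turn bit, the current $\aut$-state and the pending lookahead, but also the current counter valuation $\val_{\mathrm{cur}}$, the componentwise maximum $\val_{\mathrm{max}}$ seen so far, and for each counter $c$ a flag $u_c$ recording whether $\val_{\mathrm{max}}(c)$ was just increased. A play is then winning for Player~$O$ iff the set of counters whose flag is raised infinitely often satisfies $\phi$, which is a Muller condition over the finite colour set $2^C$; the LAR construction converts this to a parity condition with finitely many colours on a countable arena, and one appeals to parity-game determinacy (Emerson--Jutla, Mostowski) rather than Borel determinacy.

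The trade-off is the expected one: your route is shorter, and your Borel analysis of $L(\aut)$ (each ``$\limsup\rho_c<\infty$'' is $\Sigma^0_2$, hence $L$ is a finite Boolean combination of $\Sigma^0_2$ sets) generalises at once to any Borel winning condition, but it leans on Martin's heavy theorem. The paper's route needs only the comparatively elementary determinacy of parity games with finitely many colours, at the cost of a bespoke encoding of the counter dynamics into the arena. One small point to tighten in your write-up: your arena positions must also record the round index $i$ (or the number of $I$-moves so far), since the block length $f(i)$ may vary with $i$; the buffered word together with the turn bit does not by itself determine $i$.
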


\begin{proof}
We model a delay game~$\delaygame{L(\aut)}$ for a max-automaton $\aut$ as a parity game\footnote{\label{gtw}See, e.g., \cite{GraedelThomasWilke02} for a detailed definition of parity games.}~$\game$ with finitely many colors in a countable arena. As such games are determined~\cite{EmersonJutla91,Mostowski91}, so is $\delaygame{L(\aut)}$.

A vertex of the parity game stores the round number~$i \in \nats$, an indicator~$t \in \set{I,O}$ which denotes whose player's turn it is, a state~$q$ of $\aut$, and the current lookahead~$w \in \SigmaI^*$. Furthermore, it stores the current counter valuation~$\val_{\mathrm{cur}}$ and a counter valuation~$\val_{\mathrm{max}}$ keeping track of the maximal value that a counter has assumed thus far. Finally, for every counter~$c$ there is a boolean flag~$u_c$ that is set to true if the value~$\val_{\mathrm{max}}$ is updated.

The successors of a vertex of the form~$(i, I, q, w, \val_{\mathrm{cur}},  \val_{\mathrm{max}}, (u_c)_{c\in C})$,
i.e., it is Player~$I$'s turn,  have the form~$(i, O, q, ww', \val_{\mathrm{cur}},  \val_{\mathrm{max}}, (u_c)_{c\in C})$
for some $w' \in \SigmaI^{f(i)}$, i.e., Player~$I$ makes his move in round~$i$ by picking some $w' \in \SigmaI^{f(i)}$, which is appended to the current lookahead~$w$. Dually, successors of a vertex of the form~$(i, O, q, aw, \val_{\mathrm{cur}}, \val_{\mathrm{max}}, (u_c)_{c\in C})$
for $a \in \SigmaI$ and $w \in \SigmaI^*$, i.e., it is Player~$O$'s turn, have the form~$(i+1, I, \delta(q, \textstyle{a \choose b}), w, \val_{\mathrm{cur}}',  \val_{\mathrm{max}}', (u_c')_{c\in C})$
for some $b \in \SigmaO$, where we have $\pi = \ell(q, {a \choose b}, \delta(q, {a \choose b}))$, $\val_{\mathrm{cur}}' = \val_{\mathrm{cur}}\pi$, $\val_{\mathrm{max}}'(c) = \max(\val_{\mathrm{cur}}'(c), \val_{\mathrm{max}}(c))$, and $u_c' = 1$ if and only if $\val_{\mathrm{max}}'(c) >  \val_{\mathrm{max}}(c)$. Here, Player~$O$ makes her move in round~$i$, which consists of picking a letter~$b$. The state of $\aut$,  the variable valuations, and the flags are updated accordingly.

A play is winning for Player~$O$, if the set of counters~$c$ whose flag~$u_c$ is set to~$1$ infinitely often, satisfies the winning condition~$\phi$. This is a Muller condition\footnote{Again, see \cite{GraedelThomasWilke02} for a formal definition.} defined on a finite set of colors, namely the powerset of the set of counters. Applying the LAR-reduction turns the Muller condition into a parity condition with finitely many colors while keeping the arena countable, which yields the parity game~$\game$.

Let $\val_0$ map every counter to $0$. Player~$p$ has a winning strategy for the delay game~$\delaygame{L(\aut)}$ if and only if she has a winning strategy for the parity game~$\game$ from the initial vertex~$(0, I, q_I, \epsilon, \val_0, \val_0, (0)_{c \in C})$. Thus, $\delaygame{L(\aut)}$ is determined, as $\game$ is determined.
\qed
\end{proof}

This result is also implied by a recent more general  determinacy theorem for delay games with Borel winning conditions~\cite{KleinZimmermann15b}.


\section{An Equivalence Relation for Max-Automata}
\label{sec_equiv}
Fix $\aut = (Q, C, \Sigma, q_I, \delta, \ell, \phi)$. We use notions introduced in \cite{Bojanczyk11} to define equivalence relations over sequences of counter operations and over words over $\Sigma$ that capture the behavior of $\aut$. To this end, we need to introduce some notation to deal with runs of $\aut$. Given a state~$q$ and $w \in \Sigma^* \cup \Sigma^\omega$, let $\run(q, w)$ be the run of $\aut$ on $w$ starting in $q$. If $w$ is finite, then $\delta^*(q, w)$ denotes the state $\run(q,w)$ ends in. The transition profile of $w \in \Sigma^*$ is the mapping~$q \mapsto \delta^*(q,w)$. 

First, we define inductively what it means for a sequence~$\pi \in \ops(C)^*$ to transfer a counter~$c$ to a counter~$d$. The empty sequence and the operation~$\inc{c}$ transfer every counter to itself. The operation~$\reset{c}$ transfers every counter but $c$ to itself and the operation~$\maxx{c}{c_0}{c_1}$ transfers every counter but $c$ to itself and transfers $c_0$ and $c_1$ to $c$. Furthermore, if $\pi_0$ transfers $c$ to $e$ and $\pi_1$ transfers $e$ to $d$, then $\pi_0\pi_1$ transfers $c$ to $d$. If $\pi$ transfers $c$ to $d$, then we have $\val \pi (d) \ge \val(c)$ for every counter valuation~$\val$, i.e., the value of $d$ after executing $\pi$ is larger or equal to the value of $c$ before executing $\pi$, independently of the initial counter values.

Furthermore, a sequence of counter operations~$\pi$ transfers $c$ to $d$ with an increment, if there is a counter~$e$ and a decomposition~$\pi_0\,(\inc{e}) \,\pi_1$ of $\pi$ such that $\pi_0$ transfers $c$ to $e$ and $\pi_1$ transfers $e$ to $d$. If $\pi$ transfers $c$ to $d$ with an increment, then we have $\val \pi (d) \ge \val(c) +1$ for every counter valuation~$\val$. 

Finally, we say that $\pi$ is a $c$-trace of length~$m$, if there is a decomposition~$\pi = \pi_0 \cdots \pi_{m-1}$ and a sequence of counters~$c_0, c_1, \ldots, c_m$ with $c_m = c$ such that each $\pi_i$ transfers $c_{i}$ to $c_{i+1}$ with an increment. If $\pi$ is a $c$-trace of length~$m$, then we have $\val \pi (c) \ge m$ for every counter valuation~$\val$.

As only counter values reached after executing all counter operations of a transition label are considered in the semantics of max-automata, we treat $\lbls = \set{\ell(q,a,q') \mid (q, a, q') \in \delta}$ as an alphabet. Every word $\lambda \in \lbls^*$ can be flattened to a word in $\ops(C)^*$, which is denoted by $\flatten(\lambda)$. However, infixes, prefixes, or suffixes of $\lambda$ are defined with respect to the alphabet $\lbls$. We define $\ell(q, w) \in \lbls^*$ to be the sequence of elements in $\lbls$ labeling the run~$\run(q, w)$.

Let $\rho$ be a finite run of $\aut$ and let $\pi \in \ops(C)^*$. We say that $\rho$ ends with $\pi$, if $\pi$ is a suffix of $\flatten(\ell(\rho))$. A finite or infinite run contains $\pi$, if it has a prefix that ends in $\pi$.

\begin{lemma}[\cite{Bojanczyk11}]
\label{lemma_ctraces}
Let $\rho$ be a run of $\aut$ and $c$ a counter. Then, $\limsup \rho_c = \infty$ if and only if $\rho$ contains arbitrarily long $c$-traces.
\end{lemma}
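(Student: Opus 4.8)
The statement relates the semantic notion "$\limsup \rho_c = \infty$" (the counter $c$ exceeds every bound infinitely often along the run) to the purely syntactic notion "$\rho$ contains arbitrarily long $c$-traces". The plan is to prove both implications separately, using the monotonicity facts established just before the lemma, namely that a $c$-trace of length $m$ forces $\val\pi(c) \ge m$ for every initial valuation $\val$.

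For the direction "$\rho$ contains arbitrarily long $c$-traces $\Rightarrow$ $\limsup \rho_c = \infty$": fix an arbitrary bound $m$. By hypothesis there is a prefix of $\rho$ that ends with (the flattening of) a label sequence containing a $c$-trace $\pi$ of length $m$. Write that prefix's flattened label sequence as $\pi'\pi$ where $\pi$ is the $c$-trace. Applying the observation that a $c$-trace of length $m$ guarantees $\val\pi(c)\ge m$ for \emph{every} counter valuation $\val$, and instantiating $\val$ as the counter valuation reached after executing $\pi'$ from the zero valuation, we get that the value of $c$ at the end of this prefix is at least $m$. Since the value of $\rho_c$ at the position ending this prefix is exactly this value (here I need to be a little careful: the $c$-trace $\pi$ is a suffix of a flattening, and the definition of $\rho_c$ only samples counter values after complete transition labels — so I should choose the prefix to end at a transition boundary, which is possible since "ends with $\pi$" and "contains $\pi$" are defined via prefixes of runs, which always end at transition boundaries, and $\flatten$ is applied to whole label words). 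Thus $\rho_c$ attains a value $\ge m$; as $m$ was arbitrary, $\limsup \rho_c = \infty$.

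For the converse "$\limsup \rho_c = \infty$ $\Rightarrow$ $\rho$ contains arbitrarily long $c$-traces": this is the more delicate direction and the main obstacle. Fix $m$; I must exhibit a $c$-trace of length $m$ somewhere in $\rho$. Since $\limsup \rho_c = \infty$, pick a position $n$ with $\rho_c$-value at least $(m+1)\cdot K$ for a suitable constant $K$ depending only on $\aut$ (e.g.\ $K = |C|$ or a bound on label lengths, to be fixed during the argument), \emph{and} such that before reaching value $(m+1)K$ the counter $c$ was last at value $0$ — intuitively, isolate the stretch of the run since $c$ was last reset "down to the region of interest". The key combinatorial claim is a \emph{backward tracing} argument: starting from counter $c$ at position $n$, walk backwards through the sequence of transition labels, maintaining a "currently tracked counter"; at each counter operation, if the tracked counter's value strictly decreased when that operation was applied forward (i.e.\ the operation was an increment of the tracked counter, or a $\max$ that pulled in a strictly smaller source, or — for resets — the tracked counter would have been $0$ just after), update the tracked counter to a predecessor counter (via the "transfers" relation) that had the needed value before. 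Each time we pass an increment of the tracked counter we have found one more link of a $c$-trace; since the value dropped by at least $m$ over the traced segment and each individual non-increment operation drops it by at most $0$ (transfers are value-nondecreasing) while each increment drops it by exactly $1$, we must have passed at least $m$ increments of tracked counters, and stitching the corresponding decomposition together yields, by the definition of "transfers $c$ to $d$ with an increment" and its closure under concatenation, a $c$-trace of length $\ge m$. Making "tracked counter and its value history" precise — in particular handling $\maxx{c}{c_0}{c_1}$ (choose the source achieving the max) and $\reset{c}$ (the tracked counter cannot be the one reset, since its value is positive throughout the traced stretch by choice of $n$) — is exactly the technical heart; but since this is the standard pumping-style analysis underlying max-automata (and is attributed to \cite{Bojanczyk11}), I would carry it out by formalizing the backward walk as an induction on the length of the traced suffix, with the invariant "the value of the tracked counter at the current position minus the number of increments already collected is a lower bound that never increases as we walk back".
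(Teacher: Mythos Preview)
The paper does not give its own proof of this lemma: it is stated with the citation \cite{Bojanczyk11} and used as a black box. There is therefore nothing in the paper to compare your proposal against.

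That said, your plan is essentially the standard argument and is correct. A few remarks on the harder direction: the extra constant~$K$ and the requirement that $c$ was ``last at value $0$'' are unnecessary embellishments. It suffices to pick a transition boundary~$n$ with $\rho_c(n) \ge m + \max_{c'} \val_0(c')$ (or simply $\ge m$ for the zero initial valuation). Your backward walk maintains the invariant that the tracked counter's current value equals the value it actually has in $\rho$ at that position; this value is non-increasing going backward and drops by exactly~$1$ precisely when you cross an increment of the tracked counter. Hence you collect at least $m$ increments before the tracked value can reach~$0$, and the problematic reset case (tracked counter reset while its value is positive) genuinely cannot occur, exactly as you argue. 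Chaining the segments between consecutive collected increments --- and absorbing any trailing segment after the last increment into the final piece so that the trace ends at~$c$ --- yields the required $c$-trace of length~$m$ as a suffix of the flattened label sequence of the length-$n$ prefix of~$\rho$.
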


We use the notions of transfer (with increment) to define the equivalence relations that capture $\aut$'s behavior\footnote{In the conference version of this paper~\cite{Zimmermann15}, the third requirement is missing, which invalidates the proof of Lemma~\ref{lemma_swapeqwords}. The correction presented here has only a small influence on the bounds presented in Remark~\ref{remark_indexbounds} and in Corollary~\ref{cor_delaybounds}, but the statement of Lemma~\ref{lemma_swapeqwords} and thus also the proof of our main theorem are not affected.}. We say that $\lambda, \lambda' \in \lbls^*$ are equivalent, denoted by $\lambda \eqop \lambda'$, if for all counters~$c$ and $d$: 
\begin{enumerate}

	\item the flattening of $\lambda$ transfers $c$ to $d$ if and only if the flattening of $\lambda'$ transfers $c$ to $d$, 

	\item the flattening of $\lambda$ transfers $c$ to $d$ with an increment if and only if the flattening of $\lambda'$ transfers $c$ to $d$ with an increment, and

	\item $\lambda$ has a prefix whose flattening transfers $c$ to $d$  if and only if $\lambda'$ has a prefix whose flattening transfers $c$ to $d$.

\end{enumerate}
Using this, we define two words $x,x' \in \Sigma^*$ to be equivalent, denoted by $x \eqword x'$, if they have the same transition profile and if $\ell(q, x)\eqop \ell(q, x')$ for all states $q$. 

\begin{remark}
\label{remark_indexbounds}
Let $\aut$ be a max-automaton with $n$ states and $k$ counters.
\begin{enumerate}
	\item The index of $\eqop$ is at most $2^{k^2}$.
	\item The index of $\eqword$ is at most $2^{n(\log n + 6k^2)}$.
\end{enumerate}
\end{remark}

Next, we show that we can decompose an infinite word~$\alpha$ into $x_0 x_1 x_2 \cdots$ and replace each $x_i$ by an $\eqword$-equivalent $x_i'$ without changing membership in $L(\aut)$, provided the lengths of the $x_i$ and the lengths of the $x_i'$ are bounded.

\begin{lemma}
	\label{lemma_swapeqwords}
Let $(x_i)_{i \in \nats}$ and $(x_i')_{i \in \nats}$ be two sequences of words over $\Sigma^*$ with $\sup_i |x_i| < \infty$, $\sup_i |x_i'| < \infty$, and $x_i \eqword x_i'$ for all $i$. Then, $x = x_0 x_1 x_2 \cdots \in L(\aut)$ if and only if $x' = x_0' x_1' x_2' \cdots \in L(\aut)$.
\end{lemma}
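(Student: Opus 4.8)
The plan is to show that $x$ and $x'$ have runs that are "synchronized" at the block boundaries in a way that preserves, for every counter $c$, whether $\limsup \rho_c = \infty$. By Lemma~\ref{lemma_ctraces}, this amounts to showing that the run $\rho$ on $x$ contains arbitrarily long $c$-traces if and only if the run $\rho'$ on $x'$ does, for every $c$; then the acceptance conditions agree and we are done.

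First I would record the block structure. Write $y_i = x_0 \cdots x_{i-1}$ and $y_i' = x_0' \cdots x_{i-1}'$ for the prefixes. Since each $x_i \eqword x_i'$ forces $x_i$ and $x_i'$ to have the same transition profile, an easy induction shows $y_i$ and $y_i'$ have the same transition profile; in particular, for the fixed start state $q_I$, the run on $x$ passes through states $q_0 = q_I, q_1, q_2, \ldots$ at the block boundaries, and the run on $x'$ passes through the \emph{same} sequence $q_0, q_1, q_2, \ldots$. Consequently $\ell(q_i, x_i) \eqop \ell(q_i, x_i')$ for every $i$, i.e.\ the two runs are built from $\eqop$-equivalent pieces of counter-operation labels glued at matching states.

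Next comes the core combinatorial step: relating long $c$-traces in $\run(q_I, x)$ to long $c$-traces in $\run(q_I, x')$. The idea is that a $c$-trace of length $m$ in $\rho$ sits inside some prefix $y_{i+1}$ of the block decomposition (here I use that $\sup_i |x_i| < \infty$, so a $c$-trace that is "long" must span many blocks, not just live inside one bounded block — more precisely, a $c$-trace of length $m$ occurring before position $N$ forces $N$ to be large, hence many blocks are involved). I would decompose such a trace according to the block boundaries: within each block $x_j$ the trace transfers some counter to some counter, possibly with an increment, and the three clauses of $\eqop$ are exactly what is needed to transfer this behavior to $x_j'$. Clauses~1 and~2 handle blocks in the "interior" of the trace (where the trace enters at the start of $x_j$ and leaves at the end), and clause~3 — prefixes whose flattening performs a transfer — handles the first block, where the trace may start in the middle of $x_j$; a symmetric argument handles the last block using a "suffix" version, but note $\eqop$ as stated only talks about prefixes, so I would instead absorb the partial last block into the count with only a bounded additive loss (at most $\sup_i|x_i'|$ increments lost), which is harmless since we only need \emph{arbitrarily long} traces. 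Running the reconstruction inside $x'$ yields a $c$-trace in $\rho'$ of length at least $m - O(1)$, where the constant depends only on $\aut$ and the two length bounds. By symmetry, the converse holds, so $\rho$ contains arbitrarily long $c$-traces iff $\rho'$ does.

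The main obstacle I anticipate is this last-block / partial-block bookkeeping: a $c$-trace need not respect block boundaries, so one must carefully split it, apply clause~3 of $\eqop$ at the ragged left end, and argue that the ragged right end costs only boundedly many increments. Getting the quantifier order right — "for every $m$ there is a $c$-trace of length $m$" transferred to "for every $m'$ there is a $c$-trace of length $m'$" with the additive slack absorbed — is where the two length bounds $\sup_i|x_i| < \infty$ and $\sup_i|x_i'| < \infty$ are essential, and it is precisely the place where the conference version's omission of clause~3 broke the argument. Once the $c$-trace equivalence is established for all $c$, Lemma~\ref{lemma_ctraces} gives $\limsup \rho_c = \infty \iff \limsup \rho'_c = \infty$ for every $c$, so the Boolean acceptance condition $\phi$ evaluates identically on both runs, proving $x \in L(\aut) \iff x' \in L(\aut)$.
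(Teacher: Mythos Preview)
Your plan is the paper's plan: synchronize the two runs at block boundaries, observe that the block labels $\pi_j = \ell(q_j, x_j)$ and $\pi_j' = \ell(q_j, x_j')$ are $\eqop$-equivalent, transfer long $c$-traces from $\rho$ to $\rho'$ via the clauses of $\eqop$, and conclude with Lemma~\ref{lemma_ctraces}. But you have the two partial-block arguments exactly swapped, and one direction of the swap does not work.

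If the $c$-trace \emph{ends} partway through block $i$, the relevant portion of $\pi_i$ is a \emph{prefix} (it transfers the boundary counter $c_i$ to the target counter $c$); this is precisely what clause~3 transports to $\pi_i'$. If the $c$-trace \emph{starts} partway through block $s$, the relevant portion is a \emph{suffix} of $\pi_s$; here one simply truncates and begins the trace at block $s{+}1$, losing at most one block's worth of increments. You cannot ``absorb'' the last block by truncation as you propose: cutting the trace at the end of block $i{-}1$ leaves a $c_i$-trace rather than a $c$-trace, and extending to the end of block $i$ may fail because the remainder of $\pi_i$ can reset $c$. So clause~3 is indispensable at the end, and the bounded-loss truncation belongs at the start --- the reverse of what you wrote. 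The paper does exactly this: it assumes w.l.o.g.\ the trace starts at the beginning of some $\pi_s$, and uses clause~3 for the prefix of $\pi_i$.

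A smaller point: the loss in trace length is multiplicative, not additive. Clause~2 yields one ``transfer with increment'' per block, regardless of how many of the original $m$ increments fell into that block; since a block carries at most $b \cdot o$ operations (with $b = \sup_i |x_i|$ and $o$ the maximal label length), the reconstructed trace in $\rho'$ has length about $m/(b \cdot o)$, not $m - O(1)$. This is why the paper chooses $m = (m'+1)\cdot b \cdot o$. It does not affect the conclusion, but your ``$m - O(1)$'' is too optimistic.
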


\begin{proof}
Let $\rho$ and $\rho'$ be the run of $\aut$ on $x$ and $x'$, respectively.
We show that $\rho$ contains arbitrarily long $c$-traces if and only if $\rho'$ contains arbitrarily long $c$-traces. Due to Lemma~\ref{lemma_ctraces}, this suffices to show that the run of $\aut$ on $x$ is accepting if and only if the run of $\aut$ on $x'$ is accepting. Furthermore, due to symmetry, it suffices to show one direction of the equivalence. Thus, assume $\rho$ contains arbitrarily long $c$-traces and pick $m' \in \nats$ arbitrarily. We show the existence of a $c$-trace of length~$m'$ contained in $\rho'$. To this end, we take a $c$-trace in $\rho$ of length $m > m'$ for some sufficiently large $m$ and show that the $\eqop$-equivalent part of $\rho'$ contains a $c$-trace of length~$m'$. 
	
By definition of $\eqword$, processing $x_0 \cdots x_{i-1}$ and processing $x_0' \cdots x_{i-1}'$ brings $\aut$ to the same state, call it $q_i$. Furthermore, let $\pi_i = \ell(q_i, x_i)$ be the sequence of counter operations labeling the run of $\aut$ on $x_i$ starting in $q_i$, which ends in $q_{i+1}$. The sequences $\pi_i'$ labeling the runs on the $x_i'$ are defined analogously. By $x_i \eqword x_i'$ we conclude that $\pi_i$ and $\pi_i'$ are $\eqop$-equivalent as well. Furthermore, define $b = \sup_i |x_i|$, which is well-defined due to our assumption, and define $m = (m'+1) \cdot o \cdot b $, where $o = \max_{\pi \in \Lambda}\size{\pi}$ is the maximal length of a sequence of operations labeling a transition. Each $\pi_i$ can contribute at most $|\pi_i|$ increments to a $c$-trace that subsumes $\pi_i$, which is bounded by $|\pi_i| \le o \cdot b$.

Now, we pick $i$ such that $\pi_0 \cdots \pi_i$ contains a $c$-trace of length~$m$. We can assume w.l.o.g.\ that the trace starts at the beginning of $\pi_s$ for some $s \le i$ and ends in a prefix of $\pi_i$. Hence, there are counters~$c_s, c_{s+1}, \ldots, c_{i}$ such that the flattening of $\pi_{j}$ transfers $c_j$ to $c_{j+1}$ for every $j$ in the range~$s \le j < i$ and that $\pi_i$ has a prefix whose flattening transfers $c_{i}$ to $c$. 
Furthermore, by the choice of $m$ we know that at least $m'$ of these transfers are actually transfers with increments, as every transfer contains at most $b \cdot o$ increments.

The equivalence of $\pi_j$ and $\pi_j'$ implies that $\pi_j'$ realizes  the same transfers (with increments) as $\pi_j$. Hence, $\pi_0' \cdots \pi_i'$ contains a $c$-trace of length~$m'$ as well. \qed
\end{proof}

Note that the lemma does not hold if we drop the boundedness requirements on the lengths of the $x_i$ and the $x_i'$.

To conclude, we show that the equivalence classes of $\eqword$ are regular and can be \textit{tracked} on-the-fly by a finite automaton~$\auttrack$ in the following sense.

\begin{lemma}
\label{lemma_eqclasstracker}
There is a deterministic finite automaton~$\auttrack$ with set of states~${\Sigma\quotient}$ such that the run of $\auttrack$ on $w \in \Sigma^*$ ends in $[w]_{\eqword}$.
\end{lemma}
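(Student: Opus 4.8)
The plan is to build $\auttrack$ as the transition system of the Myhill–Nerode style right congruence induced by $\eqword$, so the key point is to show that $\eqword$ is in fact a \emph{right} congruence: if $x \eqword x'$, then for every $a \in \Sigma$ we also have $xa \eqword x'a$. Once this is established, the automaton is the canonical one: its states are the equivalence classes $\Sigma/{\eqword}$, the initial state is $[\epsilon]_{\eqword}$, and the transition on letter $a$ sends $[x]_{\eqword}$ to $[xa]_{\eqword}$; this is well-defined precisely by the right-congruence property, and a trivial induction on $|w|$ shows that the run on $w$ ends in $[w]_{\eqword}$. Finiteness of the state set is already guaranteed by Remark~\ref{remark_indexbounds}.

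So the work reduces to the congruence argument. Recall that $x \eqword x'$ means (i) $x$ and $x'$ have the same transition profile, and (ii) $\ell(q,x) \eqop \ell(q,x')$ for every state $q$. For (i) after appending $a$: since $x$ and $x'$ have the same transition profile, $\delta^*(q,xa) = \delta(\delta^*(q,x),a) = \delta(\delta^*(q',x),a) = \delta^*(q,x'a)$ for every $q$ (using $\delta^*(q,x) = \delta^*(q,x')$), so $xa$ and $x'a$ have the same transition profile. For (ii) after appending $a$: fix a state $q$, let $p = \delta^*(q,x) = \delta^*(q,x')$, and note $\ell(q,xa) = \ell(q,x)\cdot\ell(p,a)$ and $\ell(q,x'a) = \ell(q,x')\cdot\ell(p,a)$ — the \emph{same} suffix $\ell(p,a) \in \lbls$ is appended in both cases, because the run on the last letter starts from the common state $p$. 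Hence it suffices to show that $\eqop$ is itself a right congruence with respect to concatenation with a single element of $\lbls$, i.e.\ that $\lambda \eqop \lambda'$ implies $\lambda\mu \eqop \lambda'\mu$ for all $\mu \in \lbls^*$ (by induction it is enough to do one letter $\mu \in \lbls$).

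The main obstacle, and the only place the (newly added) third clause of $\eqop$ really earns its keep, is verifying that $\eqop$ is a right congruence for all three of its defining conditions simultaneously. For clause~1 (transfer of $c$ to $d$): a transfer by $\flatten(\lambda\mu)$ factors as a transfer by $\flatten(\lambda)$ from $c$ to some intermediate $e$ followed by a transfer by $\flatten(\mu)$ from $e$ to $d$; since $\lambda$ and $\lambda'$ agree on all $c$-to-$e$ transfers, the same factorization works for $\lambda'\mu$. Clause~2 (transfer with an increment) is analogous: the increment lies either in the $\lambda$-part — handled by $\lambda \eqop \lambda'$ for clause~2 combined with clause~1 for the $\mu$-part — or in the $\mu$-part, handled by clause~1 for the $\lambda$-part and the fact that $\mu = \mu'$. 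Clause~3 (existence of a prefix whose flattening transfers $c$ to $d$) is where a naive two-clause relation fails: a prefix of $\lambda\mu$ is either a prefix of $\lambda$ (use clause~3 for $\lambda \eqop \lambda'$) or has the form $\lambda\nu$ for a prefix $\nu$ of $\mu$, in which case its flattening transfers $c$ to $d$ iff $\flatten(\lambda)$ transfers $c$ to some $e$ and $\flatten(\nu)$ transfers $e$ to $d$ — and the first conjunct is exactly controlled by clause~1 for $\lambda \eqop \lambda'$. I would organize the proof as: (a) prove $\eqop$ is a right congruence by the case analysis just sketched; (b) deduce $\eqword$ is a right congruence using the common-suffix observation above; (c) define $\auttrack$ as the canonical automaton of this right congruence and conclude by induction on word length, citing Remark~\ref{remark_indexbounds} for finiteness.
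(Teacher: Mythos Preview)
Your proposal is correct and follows essentially the same approach as the paper: define $\auttrack$ as the canonical automaton of the right congruence $\eqword$ with transition $[x]_{\eqword} \mapsto [xa]_{\eqword}$, observe that well-definedness amounts to $\eqop$ (and hence $\eqword$) being a right congruence, and finish by induction on $|w|$. The paper merely asserts the congruence property of $\eqop$ without argument, whereas you supply the case analysis (including the role of clause~3), so your write-up is strictly more detailed but not different in substance.
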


\begin{proof}
Define $\auttrack = (\Sigma\quotient, \Sigma, \eqclassword{\epsilon}, \delta_{\auttrack}, \emptyset)$ where 
$\delta_{\auttrack}(\eqclassword{x}, a) = \eqclassword{xa}$,
which is independent of the representative~$x$ and based on the fact that $\eqop$ (and thus also $\eqword$) is a congruence, i.e., $\pi_0 \eqop \pi_1$ implies $\pi_0 \pi \eqop \pi_1\pi$ for every $\pi$. A straightforward induction over $|w|$ shows that $\auttrack$ has the desired properties.
\qed
\end{proof}

\begin{corollary}
\label{coro_eqclassesregular}
Every $\eqword$-equivalence class is regular.	
\end{corollary}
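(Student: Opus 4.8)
The plan is to read this off directly from Lemma~\ref{lemma_eqclasstracker}. Fix an arbitrary equivalence class~$\eqclassword{x}$ with $x \in \Sigma^*$. I would take the automaton~$\auttrack$ constructed in Lemma~\ref{lemma_eqclasstracker}, whose state set is~$\Sigma\quotient$, and turn it into an acceptor~$\auttrack_{\eqclassword{x}}$ for the class~$\eqclassword{x}$ simply by declaring~$\eqclassword{x}$ itself (as a state of~$\auttrack$) to be the unique accepting state, leaving the transition function and initial state~$\eqclassword{\epsilon}$ unchanged.

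The correctness argument is then immediate: by Lemma~\ref{lemma_eqclasstracker} the run of~$\auttrack$ on any~$w \in \Sigma^*$ ends in the state~$\eqclassword{w}$, so~$\auttrack_{\eqclassword{x}}$ accepts~$w$ if and only if~$\eqclassword{w} = \eqclassword{x}$, which by definition of~$\eqword$ holds exactly when~$w \in \eqclassword{x}$. Hence~$\auttrack_{\eqclassword{x}}$ recognizes the class~$\eqclassword{x}$, which is therefore regular. The only point that genuinely needs to be invoked is that~$\auttrack$ is a \emph{finite} automaton, i.e. that~$\Sigma\quotient$ is a finite set; this is guaranteed by the bound on the index of~$\eqword$ in Remark~\ref{remark_indexbounds}.

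There is no real obstacle here — the statement is a one-line corollary of the tracking automaton — so I would expect the write-up to consist of nothing more than the two observations above.
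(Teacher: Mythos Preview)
Your proposal is correct and is precisely the intended argument: the paper states the result as an immediate corollary of Lemma~\ref{lemma_eqclasstracker} without giving any further proof, and your construction of~$\auttrack_{\eqclassword{x}}$ by declaring $\eqclassword{x}$ the sole accepting state is exactly how one reads it off. There is nothing to add.
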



\section{Reducing Delay Games to Delay-free Games}
\label{sec_mainthm}
In this section, we prove our main theorem.

\begin{theorem}
	\label{theorem_main}
The following problem is decidable: given a max-automaton~$\aut$, does Player~$O$ win $\delaygame{L(\aut)}$ for some constant delay function~$f$?
\end{theorem}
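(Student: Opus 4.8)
The plan is to reduce the question to deciding the winner of a finite game with a max-regular winning condition; this is decidable, since — as indicated in the introduction — such a game can be encoded as a satisfiability instance for WMSO$+$UP on the infinite tree. The reduction extends the one underlying the $\exptime$-algorithm for $\omega$-regular delay games~\cite{KleinZimmermann15}, with the relation $\eqword$ playing the role of transition profiles there; the genuinely new ingredient is Lemma~\ref{lemma_swapeqwords}, which is also the reason the construction works for \emph{bounded} lookahead only.

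Concretely, I would first note that playing $\delaygame{L(\aut)}$ under the constant delay function $f$ with $f(0)=d$ is the same as letting both players produce their words in blocks of length $d$, with Player~$O$'s $j$-th block depending on Player~$I$'s blocks $0,\dots,j+1$. For $x\in\SigmaI^*$, let $A(x)=\set{[\binom{x}{v}]_{\eqword}\mid v\in\SigmaO^{|x|}}$ be the set of $\eqword$-classes Player~$O$ can realise on a block $x$; these sets lie in the finite set $2^{\Sigma^*\slash\eqword}$. By Lemma~\ref{lemma_swapeqwords} — applicable precisely because all blocks have the bounded length $d$ — whether $\binom{\alpha}{\beta}\in L(\aut)$ depends only on the sequence of $\eqword$-classes of the blocks. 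I would therefore build a finite game $\game_d$ in which Player~$I$ announces the set $A(x_j)$ of his $j$-th length-$d$ block, one round ahead of Player~$O$ (the one-block lookahead is absorbed into the finite arena); Player~$O$ answers each block by picking a class from the announced set; and the winning condition — that the resulting sequence of $\eqword$-classes be accepted by the max-automaton $\aut'$ that runs $\aut$ on class representatives — is well-defined by Lemma~\ref{lemma_swapeqwords}, trackable by Lemma~\ref{lemma_eqclasstracker}, and reduces via Lemma~\ref{lemma_ctraces} to the $c$-trace criterion. Thus $\game_d$ is a finite game with a max-regular winning condition, and Player~$O$ wins $\delaygame{L(\aut)}$ for this $f$ iff she wins $\game_d$.

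It remains to remove the dependence on $d$. Since extra lookahead never hurts Player~$O$, the existence of \emph{some} constant winning delay is upward closed in $d$, so it suffices to exhibit a fixed bound $d^\ast$, doubly exponential in $\aut$, such that Player~$O$ wins $\delaygame{L(\aut)}$ for some constant delay iff she wins $\game_{d^\ast}$; deciding the single finite game $\game_{d^\ast}$ then settles the theorem. For the bound I would argue, in the spirit of Hosch and Landweber, that a winning strategy for Player~$O$ using delay larger than $d^\ast$ can be compressed: once the lookahead is long enough, two of its positions look identical to $\aut'$, and the intervening infix can be removed from the lookahead without affecting the outcome by (the strategy-level version of) Lemma~\ref{lemma_swapeqwords}; $d^\ast$ is then read off from the index bounds in Remark~\ref{remark_indexbounds}, which simultaneously yields the doubly-exponential upper bound on the necessary lookahead advertised in the introduction.

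I expect the compression step to be the main obstacle: it must be carried out so that Player~$O$'s subsequent decisions stay consistent after the deletion, and so that re-gluing the run of $\aut$ across the cut preserves which counters are unbounded — exactly the invariance guaranteed by the three clauses defining $\eqop$ (this is why the conference version, which omitted the third clause, was flawed). The other point requiring care is the correctness of $\game_d$: one must check that $L(\aut')$ faithfully encodes "the concatenation of the blocks lies in $L(\aut)$" for arbitrary infinite sequences of bounded-length blocks, again via Lemma~\ref{lemma_ctraces} and Lemma~\ref{lemma_swapeqwords}. Finally, note that the essential use of boundedness in Lemma~\ref{lemma_swapeqwords} is precisely why this method settles the bounded-lookahead question but not the arbitrary-lookahead one, which — as the paper shows separately — can have a different answer.
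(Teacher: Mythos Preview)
Your high-level plan---reduce to a finite game with a max-regular winning condition and then appeal to decidability of such games via WMSO$+$UP---is exactly the paper's. Your abstraction via the sets $A(x)=\set{[\binom{x}{v}]_{\eqword}\mid v\in\SigmaO^{|x|}}$ is also morally the paper's: since an $\eqword$-class records the full transition profile together with the $\eqop$-data from every state, your $A(x)$ carries the same information as the paper's partial functions~$\resolve_w^D$ (the domain~$D$ just pins down the starting state, which is then moved by the profile stored in the class).

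Where you diverge is in handling the dependence on $d$. You build a family~$\game_d$ and plan a separate Hosch--Landweber-style \emph{compression} step to descend to a fixed~$d^\ast$. The paper instead builds a \emph{single} game~$\game(\aut)$, independent of~$d$, by letting Player~$I$ pick only those functions that have \emph{infinitely many} witnesses (the set~$\curlyR$). The two properties that replace your compression argument are purely automata-theoretic and easy: (i) every $\resolve\in\curlyR$ has a witness of length in $[k,\,k+2^{n^2}]$ for every~$k$, because each witness language~$\wit{\resolve}$ is recognised by a DFA with $2^{n^2}$ states; and (ii) every word of length $\ge 2^{n^2}$ witnesses some $\resolve\in\curlyR$. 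With (i), a winning strategy for Player~$O$ in~$\delaygame{L(\aut)}$ with any constant~$f$ transfers to~$\game(\aut)$ by choosing witnesses at least as long as~$f(0)$; with (ii), a winning strategy in~$\game(\aut)$ transfers back with $f(0)=2\cdot 2^{n^2}$. No pumping on strategies is needed, and the doubly-exponential bound drops out of the same computation.

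So the genuine gap in your proposal is precisely the step you flag yourself: the ``strategy-level compression'' is not carried out, and it is not clear it can be done directly at the level of delay-game strategies without essentially re-discovering the regularity of the witness languages. If you do want to stay with your $\game_d$ formulation, the clean fix is to show that $\set{A(x)\mid |x|=d}$ stabilises for all $d$ beyond the DFA bound for the languages $\set{x\mid A(x)=A}$; this is exactly the content of the paper's Lemma on~$\curlyR$, and then $\game_d=\game_{d'}$ for all sufficiently large~$d,d'$, making the choice of~$d^\ast$ immediate and the compression argument unnecessary.
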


To prove this result, we construct a delay-free game in a finite arena with a max-regular winning condition that is won by Player~$O$ if and only if she wins $\delaygame{L(\aut)}$ for some constant delay function~$f$. The winner of such a game can be determined effectively.

Let $\aut = (Q, C, \SigmaI \times \SigmaO, q_I, \delta, \ell, \phi)$ and let  
$\auttrack = ((\SigmaI \times \SigmaO)\quotient, \SigmaI \times \SigmaO, [\epsilon]_{\eqword}, \delta_{\auttrack}, \emptyset)$
be defined as in Lemma~\ref{lemma_eqclasstracker}. For the sake of readability, we denote the $\eqword$-equivalence class of $w$ by $\eqclass{w}$ without a subscript. Furthermore, we denote equivalence classes using the letter $S$.
We define the product~$\autproduct = (Q_\autproduct, C, \SigmaI \times \SigmaO, q_I^\autproduct, \delta_\autproduct, \ell_\autproduct, \phi)
$ of $\aut$ and $\auttrack$, which is a max-automaton, where
\begin{itemize}
	
	\item $Q_\autproduct = Q \times ((\SigmaI \times \SigmaO)\quotient)$,
	
	\item $q_I^\autproduct = (q_I, [\epsilon]_{\eqword})$,
	
	\item $\delta_\autproduct((q,S),a) = (\delta(q, a),\delta_{\auttrack}(S,a))$ for a states $q \in Q$, an equivalence class~$S \in (\SigmaI\times\SigmaO)\quotient$, and a letter~$a \in \SigmaI \times \SigmaO$, and
	
	\item $\ell_\autproduct((q, S), a, (q',S')) = \ell(q, a, q')$.
	
\end{itemize} 
Let $n = |Q_\autproduct|$. We have $L(\autproduct) = L(\aut)$, since acceptance only depends on the component~$\aut$ of $\autproduct$. However, we are interested in partial runs of $\autproduct$, as the component~$\auttrack $ keeps track of the equivalence class of the input processed by~$\autproduct$.

\begin{remark}
\label{remark_autp}
Let $w \in (\SigmaI \times \SigmaO)^*$ and let $(q_0, S_0)(q_1, S_1) \cdots (q_{|w|}, S_{|w|})$ be the run of $\autproduct$ on $w$ from some state~$(q_0, S_0)$ with $S_0 = \eqclass{\epsilon}$. Then, $q_0 q_1 \cdots q_{|w|}$ is the run of $\aut$ on $w$ starting in $q_0$ and $S_{|w|} = \eqclass{w}$.
\end{remark}

In the following, we will work with partial functions~$\resolve$ from $Q_\autproduct$ to $\pow$, where we denote the domain of $\resolve$ by $\dom(r)$. Intuitively, we use such a function to capture the information encoded in the lookahead provided by Player~$I$. Assume Player~$I$ has picked $\alpha(0) \cdots \alpha(j)$ and Player~$O$ has picked $\beta(0) \cdots \beta(i)$ for some $i < j$, i.e., the lookahead is $\alpha(i+1) \cdots \alpha(j)$. Then, we can determine the state~$q$ that $\autproduct$ reaches when processing ${\alpha(0) \choose \beta(0)} \cdots {\alpha(i) \choose \beta(i)}$, but the automaton cannot process $\alpha(i+1) \cdots \alpha(j)$, since Player~$O$ has not yet provided her moves $\beta(i+1) \cdots \beta(j)$. However, we can determine which states Player~$O$ can enforce by picking an appropriate completion. These will be contained in $\resolve(q)$. 

To formalize this, we use the function~$\delta_\autpow \colon 
\pow \times \SigmaI \rightarrow \pow$ defined via
$ \delta_\autpow(P, a) = \bigcup_{q \in P} \bigcup_{b \in \SigmaO}  \delta_\autproduct\left(q, {a \choose b}\right) $,
i.e., $\delta_\autpow$ is the transition function of the powerset automaton of the projection automaton~$\proj{\autproduct}$.
As usual, we extend $\delta_\autpow$ to $\delta_\autpow^* \colon \pow \times \SigmaI^* \rightarrow \pow $ via $\delta_\autpow^*(P, \epsilon) = P$ and
$\delta_\autpow^*(P, wa) = \delta_\autpow( \delta_\autpow^*(P, w) , a)$.

Let $D \subseteq Q_\autproduct$ be non-empty and let $w \in \SigmaI^*$. We define the function~$\resolve_w^D$ with domain $D$ as follows: for every $(q,S) \in D$, we have 
\[ \resolve_w^D(q,S) = \delta_\autpow^*(\set{(q,\eqclass{\epsilon})},w) \enspace, \]
i.e., we collect all states~$(q',S')$ reachable from $(q,\eqclass{\epsilon})$ (note that the second component is the equivalence class of the empty word, not the class~$S$ from the argument) via a run of $\proj{\autproduct}$ on $w$. Thus, if $(q', S') \in \resolve_w^D(q,S)$, then there is a word~$w'$ whose projection is $w$ and with $\eqclass{w'}=S'$ such that the run of $\aut$ on $w'$ leads from $q$ to $q'$. Thus, if Player~$I$ has picked the lookahead~$w$, then Player~$O$ could pick an answer such that the combined word leads $\aut$ from $q$ to $q'$ and such that it is a representative of $S'$. 
 
We call $w$ a witness for a partial function $\resolve \colon Q_\autproduct \rightarrow \pow$, if we have $\resolve = \resolve_w^{\dom(r)}$. Thus, we obtain a language~$\wit{\resolve} \subseteq \SigmaI^*$ of witnesses for each such function~$\resolve$. Now, we define $\curlyR = \set{\resolve \mid \dom(\resolve)\neq \emptyset \text{ and } \wit{\resolve} \text{ is infinite}}$.

\begin{lemma}
\label{lem_graphautprops}
Let $\curlyR$ be defined as above.
\begin{enumerate}
	
	\item\label{lem_graphautprops_witnonempty}
	Let $\resolve \in \curlyR$. Then, $\resolve(q) \not= \emptyset$ for every $q \in \dom(\resolve)$.
	
	\item 
	Let $\resolve$ be a partial function from $Q_\autproduct$ to $\pow$. Then, $\wit{\resolve}$ is recognized by a deterministic finite automaton with $2^{n^2}$ states.
	
	\item\label{lem_graphautprops_witnessdense} Let $\resolve \in  \curlyR$. Then, $\wit{\resolve}$ contains a word~$w$ with $k \le |w| \le k + 2^{n^2}$ for every $k$.
		
	\item\label{lem_graphautprops_witdisjoint}
	Let $\resolve \neq \resolve' \in \curlyR$ such that  $\dom(\resolve) = \dom(\resolve')$. Then, $\wit{\resolve} \cap \wit{\resolve'} = \emptyset$.
	
	\item\label{lem_graphautprops_witcomplete}
	Let $D \subseteq Q_\autproduct$ be non-empty and let $w$ be such that $|w| \ge 2^{n^2}$. Then, there exists some $\resolve \in \curlyR$ with $\dom(\resolve) = D$ and $w \in \wit{\resolve}$.
	
\end{enumerate}
\end{lemma}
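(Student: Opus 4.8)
The plan is to prove the five parts of Lemma~\ref{lem_graphautprops} in order, since each relies on the structure of the deterministic automaton recognizing the witness languages.

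\paragraph{Part~\ref{lem_graphautprops_witnonempty}.} I would argue directly: if $\resolve \in \curlyR$, then $\wit{\resolve}$ is infinite, hence nonempty, so fix any $w \in \wit{\resolve}$. By definition $\resolve = \resolve_w^{\dom(\resolve)}$, so for $q \in \dom(\resolve)$ we have $\resolve(q) = \delta_\autpow^*(\set{(q,\eqclass{\epsilon})},w)$. Since $\delta_\autproduct$ is total (it is built from the complete transition function~$\delta$ of $\aut$), the powerset transition $\delta_\autpow$ maps nonempty sets to nonempty sets, and a trivial induction on $|w|$ shows $\delta_\autpow^*(\set{(q,\eqclass{\epsilon})},w) \neq \emptyset$.

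\paragraph{Part 2.} The key construction: build a DFA whose states are functions from $\dom(\resolve)$ to $\pow$ (equivalently, functions from $Q_\autproduct$ to $\pow \cup \set{\bot}$), tracking on reading $w \in \SigmaI^*$ the function $q \mapsto \delta_\autpow^*(\set{(q,\eqclass{\epsilon})},w)$ coordinatewise, with the single accepting state being $\resolve$ itself. Concretely the state space has size at most $(2^n)^n = 2^{n^2}$ — this is where the claimed bound comes from — and the transition function applies $\delta_\autpow$ in each coordinate. Then $\wit{\resolve}$ is exactly the set of words leading from the start state $q \mapsto \set{(q,\eqclass{\epsilon})}$ (restricted to $\dom(\resolve)$) to $\resolve$. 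A short induction confirms correctness.

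\paragraph{Parts~\ref{lem_graphautprops_witnessdense}, \ref{lem_graphautprops_witdisjoint}, \ref{lem_graphautprops_witcomplete}.} These are pumping/reachability arguments on the DFA $\aut_\resolve$ of Part~2, which has $m \le 2^{n^2}$ states. For Part~\ref{lem_graphautprops_witdisjoint}: distinct $\resolve \neq \resolve'$ with the same domain correspond to distinct accepting states, so their accepted languages are disjoint. For Part~\ref{lem_graphautprops_witcomplete}: given $D$ nonempty, run the DFA $\aut_D$ (with start state $q \mapsto \set{(q,\eqclass{\epsilon})}$ on domain $D$) on any $w$ with $|w| \ge m$; it ends in some state $\resolve$ with $\dom(\resolve) = D$, and by pumping a cycle on the accepting path we can extend to witnesses of arbitrarily large length, so $\wit{\resolve}$ is infinite, i.e.\ $\resolve \in \curlyR$, and $w \in \wit{\resolve}$ by construction. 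For Part~\ref{lem_graphautprops_witnessdense}: since $\wit{\resolve}$ is infinite and accepted by an $m$-state DFA, there is an accepting path through a cycle of length $\le m$, so from any witness of length $\ge$ something we can both pump down (by the reachable-cycle structure, getting witnesses in every residue class) and the gap between consecutive attainable lengths is at most $m \le 2^{n^2}$; I would phrase this via: the set of lengths of words in an infinite regular language accepted by an $m$-state DFA is eventually periodic with period $\le m$, and in fact intersects every interval of length $m$ once we are past the start.

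\paragraph{Main obstacle.} The genuinely delicate point is Part~\ref{lem_graphautprops_witnessdense}: ensuring that $\wit{\resolve}$ really does meet \emph{every} window $[k,k+2^{n^2}]$ rather than merely being infinite with bounded gaps eventually. This needs the observation that the accepting state $\resolve$ (for $\resolve \in \curlyR$) lies on a cycle reachable from the start, so one can prepend a short path, traverse the cycle the right number of times, and land exactly in the target window; the bookkeeping that the cycle length and the connecting paths are all $\le 2^{n^2}$ is what makes the bound come out. The other parts are routine once the DFA of Part~2 is set up.
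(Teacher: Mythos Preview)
The paper states this lemma without proof, treating all five items as routine consequences of the construction; your proposal correctly supplies the missing details. The DFA you build in Part~2 (tracking the function $q \mapsto \delta_\autpow^*(\{(q,[\epsilon])\}, w)$ on the fixed domain $\dom(\resolve)$, with $\resolve$ as the sole accepting state) is exactly the right object, and Parts~1, 4, and~5 follow from it as you describe.

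One small correction for Part~\ref{lem_graphautprops_witnessdense}: it is not in general true that the accepting state~$\resolve$ itself lies on a cycle --- the cycle guaranteed by infiniteness of $\wit{\resolve}$ may occur strictly before $\resolve$ on an accepting path (e.g.\ a DFA with a loop at an intermediate state and a single non-looping accepting sink). The clean argument is the pumping-down variant you allude to at the end: pick any $w \in \wit{\resolve}$ with $|w| \ge k + m$, where $m \le 2^{n^2}$ is the number of DFA states, and repeatedly excise a loop of length at most $m$ from the accepting run. Each excision keeps the word in $\wit{\resolve}$ and shortens it by between $1$ and $m$, so the resulting decreasing sequence of lengths cannot skip over the interval $[k, k+m]$. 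This works for every $k \ge 0$, not merely ``once we are past the start''.
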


Due to items~\ref{lem_graphautprops_witdisjoint}.) and \ref{lem_graphautprops_witcomplete}.), we can define for every non-empty~$D \subseteq Q_\autproduct$ a function $\witmap{D}$ that maps words~$w \in \SigmaI^*$ with $|w| \ge 2^{n^2}$ to the unique function~$\resolve$ with $\dom(\resolve) = D$ and $w \in \wit{\resolve}$. This will be used later in the proof.

Now, we define an abstract game~$\game(\aut)$ between Player~$I$ and Player~$O$ that is played in rounds $i = 0, 1, 2, \ldots$: in each round, Player~$I$ picks a function from $\curlyR$ and then Player~$O$ picks a state~$q$ of $\autproduct$. In round~$0$, Player~$I$ has to pick $\resolve_0$ subject to constraint~(C1): $\dom(\resolve_0) = \set{q_I^\autproduct}$. Then, Player~$O$ has to pick a state~$q_0 \in \dom(\resolve_0)$ (which implies $q_0 = q_I^\autproduct$). Now, consider round $i>0$: Player~$I$ has picked functions~$\resolve_0, \resolve_1, \ldots, \resolve_{i-1}$ and Player~$O$ has picked states~$q_0, q_1, \ldots, q_{i-1} $. Now, Player~$I$ has to pick a function~$\resolve_i$ subject to constraint~(C2): $\dom(\resolve_i) = \resolve_{i-1}(q_{i-1}).$ 
Then, Player~$O$ has to pick a state~$q_i \in \dom(\resolve_i)$. Both players can always move: Player~$I$ can, as $\resolve_{i-1}(q_{i-1})$ is always non-empty (Lemma~\ref{lem_graphautprops}.\ref{lem_graphautprops_witnonempty}) and thus the domain of some $\resolve \in \curlyR$ (Lemma~\ref{lem_graphautprops}.\ref{lem_graphautprops_witcomplete}) and Player~$O$ can, as the domain of every~$\resolve \in \curlyR$ is non-empty by construction.

The resulting play is the sequence~$ \resolve_0 q_0 \resolve_1 q_1 \resolve_2 q_2\cdots $. Let $q_i = (q_i', S_i)$ for every $i$, i.e., $S_i$ is an $\eqword$-equivalence class. Let $x_i \in S_i$ for every $i$ such that $\sup_{i}|x_i|< \infty$. Such a sequence can always be found as $\eqword$ has finite index. Player~$O$ wins the play if the word~$x_0 x_1 x_2 \cdots$ is accepted by $\aut$. Due to Lemma~\ref{lemma_swapeqwords}, this definition is independent of the choice of the representatives~$x_i$. 

A strategy for Player~$I$ is a function~$\stratI'$ mapping the empty play prefix to a function~$\resolve_0$ subject to constraint~(C1) and mapping a non-empty play prefix~$\resolve_0 q_0 \cdots \resolve_{i-1} q_{i-1} $ ending in a state to a function~$\resolve_i$ subject to constraint~(C2). On the other hand, a strategy for Player~$O$ maps a play prefix~$\resolve_0 q_0 \cdots  \resolve_{i} $ ending in a function to a state~$q_i\in \dom(\resolve_{i})$. A play $ \resolve_0 q_0 \resolve_1 q_1 \resolve_2 q_2 \cdots $ is consistent with $\stratI'$, if $\resolve_i = \stratI'(\resolve_0 q_0 \cdots \resolve_{i-1} q_{i-1} )$ for every $i \ge 0$. Dually, the play is consistent with $\stratO'$, if $q_i = \stratO'(\resolve_0 q_0 \cdots  \resolve_i)$ for every $i \ge 0$. A strategy is winning for Player~$\p$, if every play that is consistent with this strategy is winning for her. As usual, we say that Player~$\p$ wins $\game(\aut)$, if she has a winning strategy. 

\begin{lemma}
	\label{lem_splitgamecorrectness}
Player~$O$ wins $\delaygame{L(\aut)}$ for some constant delay function~$f$ if and only if Player~$O$ wins $\game(\aut)$.
\end{lemma}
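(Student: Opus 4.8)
The plan is to prove the two implications separately; in each direction a winning strategy for Player~$O$ in one game is turned into one for the other by reading Player~$I$'s moves in blocks of bounded length and abstracting each block of lookahead by a function in $\curlyR$. Write $n = \size{Q_\autproduct}$, recall the map $\witmap{D}$ from the remark after Lemma~\ref{lem_graphautprops} (for non-empty $D \subseteq Q_\autproduct$ and $w$ with $\size{w}\ge 2^{n^2}$, the block $w$ is a witness for $\witmap{D}(w)\in\curlyR$, which has domain $D$), and recall that $w\in\wit{\resolve}$ means exactly $\resolve = \resolve_w^{\dom(\resolve)}$. The bookkeeping that makes both translations work is an \emph{index shift}: fix a play $\resolve_0 q_0 \resolve_1 q_1 \cdots$ of $\game(\aut)$, write $q_i = (q_i', S_i)$, and choose for each $i$ a witness $w_i \in \wit{\resolve_i}$ with $\sup_i\size{w_i}<\infty$. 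By constraint~(C2) we have $q_{i+1}\in\resolve_i(q_i) = \delta_\autpow^*(\set{(q_i',\eqclass{\epsilon})}, w_i)$, so by Remark~\ref{remark_autp} there is $v_i\in\SigmaO^{\size{w_i}}$ such that the run of $\aut$ on $\binom{w_i}{v_i}$ leads from $q_i'$ to $q_{i+1}'$ and $\eqclass{\binom{w_i}{v_i}} = S_{i+1}$; moreover $S_0 = \eqclass{\epsilon}$. Hence the representatives $x_0 = \epsilon$ and $x_{i+1} = \binom{w_i}{v_i}$ satisfy $x_i\in S_i$ and $\sup_i\size{x_i}<\infty$, so that the winner of the play is decided by whether $x_0 x_1 x_2\cdots = \binom{w_0}{v_0}\binom{w_1}{v_1}\cdots$ lies in $L(\aut)$; and this is precisely the outcome of the play of $\delaygame{L(\aut)}$ in which Player~$I$ plays the blocks $w_0,w_1,\dots$ and Player~$O$ answers $v_0,v_1,\dots$.

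For the direction from $\game(\aut)$ to $\delaygame{L(\aut)}$, fix a winning strategy $\stratO'$ for Player~$O$ in $\game(\aut)$, set $B = 2^{n^2}$, and take $f$ constant with $f(0) = 2B$. On input $\alpha$, Player~$O$ cuts $\alpha$ into blocks $w_0 w_1 w_2\cdots$ of length $B$ and simulates a play of $\game(\aut)$ in which Player~$I$ plays $\resolve_0 = \witmap{\set{q_I^\autproduct}}(w_0)$ and, for $i\ge 1$, $\resolve_i = \witmap{D_i}(w_i)$ with $D_i = \resolve_{i-1}(q_{i-1})$, and Player~$O$ plays $q_i = \stratO'(\resolve_0 q_0\cdots\resolve_i)$. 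This simulated play is legal: (C1) holds by construction, and $D_i$ is non-empty by Lemma~\ref{lem_graphautprops}.\ref{lem_graphautprops_witnonempty} and $\size{w_i}=B$, so $\witmap{D_i}(w_i)$ is defined and has domain $D_i$, giving (C2). By the index-shift observation Player~$O$ can answer block~$i$ with the word $v_i\in\SigmaO^B$ extracted from $q_{i+1}\in\resolve_i(q_i)$, and a routine timing check shows that committing to $v_i$ (equivalently, knowing $q_{i+1}$) only needs blocks $w_0,\dots,w_{i+1}$, which a constant lookahead of $2B$ provides. Since the simulated play is consistent with the winning strategy $\stratO'$, it is won, i.e.\ $x_0 x_1 x_2\cdots\in L(\aut)$; the outcome $\binom{w_0}{v_0}\binom{w_1}{v_1}\cdots$ of the delay play is blockwise $\eqword$-equivalent to $x_0 x_1 x_2\cdots$ with bounded block lengths on both sides, so by Lemma~\ref{lemma_swapeqwords} it lies in $L(\aut)$ as well.

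For the converse, fix a winning strategy $\stratO$ for Player~$O$ in $\delaygame{L(\aut)}$ for some constant delay function $f$ with $f(0) = d$. In $\game(\aut)$ Player~$O$ decodes Player~$I$'s move $\resolve_i$ into a block $w_i\in\wit{\resolve_i}$ with $\size{w_i}\in[d, d + 2^{n^2}]$, which exists by Lemma~\ref{lem_graphautprops}.\ref{lem_graphautprops_witnessdense} since $\resolve_i\in\curlyR$; she runs $\stratO$ against the input $\alpha = w_0 w_1\cdots$, obtaining answer-blocks $v_0, v_1,\dots$ with $\size{v_i}=\size{w_i}$, and she replies $q_0 = q_I^\autproduct$ and, for $i\ge 1$, $q_i = (q_i', S_i)$ where $q_i'$ is the $\aut$-state reached on $\binom{w_0}{v_0}\cdots\binom{w_{i-1}}{v_{i-1}}$ and $S_i = \eqclass{\binom{w_{i-1}}{v_{i-1}}}$. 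By Remark~\ref{remark_autp} this $q_i$ lies in $\delta_\autpow^*(\set{(q_{i-1}',\eqclass{\epsilon})}, w_{i-1}) = \resolve_{i-1}(q_{i-1}) = \dom(\resolve_i)$, so the move is legal; and because every block has length $\ge d$, the answer-blocks $v_0,\dots,v_{i-1}$ depend only on $w_0,\dots,w_i$, so $q_i$ is available in round~$i$. By the index-shift observation, the winner of the simulated play is decided by $x_0 x_1 x_2\cdots = \binom{w_0}{v_0}\binom{w_1}{v_1}\cdots$, which is the outcome of the delay play against the winning strategy $\stratO$ and hence in $L(\aut)$ (and by Lemma~\ref{lemma_swapeqwords} this is independent of the representative choice). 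Thus every play consistent with this strategy is won by Player~$O$.

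I expect the main obstacle to be getting the index alignment right and keeping it consistent: the state $q_{i+1}$ picked in abstract round~$i+1$ --- not $q_i$ --- carries the $\eqword$-class of Player~$O$'s answer to block~$i$, and this has to be threaded both through the legality check $q_i\in\dom(\resolve_i)$ (via Remark~\ref{remark_autp}) and through the translation of winning conditions (via Lemma~\ref{lemma_swapeqwords}). The second mildly delicate point is checking causality of the constructed delay strategies, i.e.\ that the chosen constant lookahead really does make every output letter depend only on already-revealed input; the remaining steps --- well-definedness of $\witmap{D}$, legality and consistency of the simulated $\game(\aut)$-plays, and bounded-block $\eqword$-invariance of membership in $L(\aut)$ --- are immediate from Remark~\ref{remark_autp} and Lemmas~\ref{lem_graphautprops} and~\ref{lemma_swapeqwords}.
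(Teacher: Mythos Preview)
Your proposal is correct and follows essentially the same approach as the paper's proof: both directions are simulations that translate between lookahead blocks and functions in $\curlyR$ via witnesses (using Lemma~\ref{lem_graphautprops}.\ref{lem_graphautprops_witnessdense} and the map $\witmap{D}$), and the winning conditions are matched using Remark~\ref{remark_autp} and Lemma~\ref{lemma_swapeqwords}. Your explicit ``index-shift'' observation packages the bookkeeping a bit more cleanly than the paper, but the content and the constants ($B = 2^{n^2}$, lookahead $2B$, witness lengths in $[d, d+2^{n^2}]$) are identical.
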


\begin{proof}
First, assume Player~$O$ has a winning strategy~$\stratO$ for $\delaygame{L(\aut)}$ for some constant delay function~$f$. We construct a winning strategy~$\stratO'$ for Player~$O$ in $\game(\aut)$ via simulating a play of $\game(\aut)$ by a play of $\delaygame{L(\aut)}$.

Let $\resolve_0$ be the first move of Player~$I$ in $\game(\aut)$, which has to be responded to by Player~$O$ by picking $q_I^\autproduct = \stratO'(\resolve_0)$, and let $\resolve_1$ be Player~$I$'s response to that move. Let $w_0 \in \wit{\resolve_0}$ and $w_1\in \wit{\resolve_1}$ be witnesses for the functions picked by Player~$I$. Due to Lemma~\ref{lem_graphautprops}.\ref{lem_graphautprops_witnessdense}, we can choose $w_0$ and $|w_1|$ with $f(0) \le |w_0|, |w_1| \le f(0) + 2^{n^2}$. 

We simulate the play prefix~$\resolve_0 q_0 \resolve_1$ in $\delaygame{L(\aut)}$, where $q_0 = q_I^\autproduct$: Player~$I$ picks $w_0 w_1 = \alpha(0) \cdots \alpha(\ell_1 -1)$ in his first moves and let $\beta(0) \cdots \beta(\ell_1 - f(0))$ be the response of Player~$O$ according to $\stratO$. We obtain $|\beta(0) \cdots \beta(\ell_1 - f(0))| \ge |w_0|$, as $|w_1| \ge f(0)$. 

Thus, we are in the following situation for $i=1$: in $\game(\aut)$, we have a play prefix~$\resolve_0 q_0 \cdots \resolve_{i-1} q_{i-1} \resolve_i$ and in $\delaygame{L(\aut)}$, Player~$I$ has picked 
$w_0 w_1 \cdots w_i = \alpha(0) \cdots \alpha(\ell_i -1)$ and Player~$O$ has picked $\beta(0) \cdots \beta(\ell_i - f(0))$ according to $\stratO$, where $|\beta(0) \cdots \beta(\ell_i - f(0))| \ge |w_0 \cdots w_{i-1}|$. Furthermore, $w_j$ is a witness for $r_j$ for every $j \le i$.

In this situation, let $q_i$ be the state of $\autproduct$ that is reached when processing $w_{i-1}$ and the corresponding moves of Player~$O$, i.e., 
\[{\alpha(|w_0 \cdots w_{i-2}|) \choose \beta(|w_0 \cdots w_{i-2}|)} \cdots {\alpha(|w_0 \cdots w_{i-1}|-1) \choose \beta(|w_0 \cdots w_{i-1}|-1)} \enspace,\]
 starting in state~$(q_{i-1}', \eqclass{\epsilon})$, where $q_{i-1} = (q_{i-1}', S_{i-1})$.
 
By definition of $\resolve_{i-1}$, we have $q_i \in \resolve_{i-1}(q_{i-1})$, i.e., $q_i$ is a legal move for Player~$O$ in $\game(\aut)$ to extend the play prefix~$\resolve_0 q_0 \cdots \resolve_{i-1} q_{i-1} \resolve_i$. Thus, we define $\stratO'(\resolve_0 q_0 \cdots \resolve_{i-1} q_{i-1} \resolve_i) = q_i$. Now, let $\resolve_{i+1}$ be the next move of Player~$I$ in $\game(\aut)$ and let $w_{i+1} \in \wit{\resolve_{i+1}}$ be a witness with $f(0) \le |w_{i+1}| \le f(0) + 2^{n^2}$. Going back to $\delaygame{L(\aut)}$, let Player~$I$ pick $w_{i+1} = \alpha(\ell_i) \cdots \alpha(\ell_{i+1} - 1)$ as his next moves and let $\beta(\ell_i -f(0)+1) \cdots \beta(\ell_{i+1} - f(0))$ be the response of Player~$O$ according to $\stratO$. Then, we are in the situation as described in the previous paragraph, which concludes the definition of $\stratO'$.

It remains to show that $\stratO'$ is winning for Player~$O$ in $\game(\aut)$. Consider a play~$\resolve_0 q_0 \resolve_1 q_1 \resolve_2 q_2 \cdots$ that is consistent with $\stratO'$ and let $w = {\alpha(0) \choose \beta(0)}{\alpha(1) \choose \beta(1)}{\alpha(2) \choose \beta(2)} \cdots$ be the corresponding outcome constructed as in the simulation described above. Let $q_i = (q_i', S_i)$, i.e., $q_i'$ is a state of our original automaton~$\aut$. A straightforward inductive application of Remark~\ref{remark_autp} shows that $q_i'$ is the state that $\aut$ reaches after processing $w_{i}$ and the corresponding moves of Player~$O$, i.e.,
\[x_i = {\alpha(|w_0 \cdots w_{i-1}|) \choose \beta(|w_0 \cdots w_{i-1}|)}
 \cdots
{\alpha(|w_0 \cdots w_{i}|-1) \choose \beta(|w_0 \cdots w_{i}|-1)} \enspace, \]
starting in $q_{i-1}'$, and that $S_i = \eqclass{x_i}$. Note that the length of the $x_i$ is bounded, i.e., we have $\sup_i |x_i| \le f(0) + 2^{n^2}$.

As $w$ is consistent with a winning strategy for Player~$O$, the run of $\aut$ on $w = x_0 x_1 x_2 \cdots$ is accepting. Thus, we conclude that the play $\resolve_0 q_0 \resolve_1 q_1 \resolve_2 q_2 \cdots$ is winning for Player~$O$, as the $x_i$ are a bounded sequence of representatives. Hence, $\stratO'$ is indeed a winning strategy for Player~$O$ in $\game(\aut)$.

%

Now, we consider the other implication: assume Player~$O$ has a winning strategy~$\stratO'$ for $\game(\aut)$ and fix $d = 2^{n^2}$. We construct a winning strategy~$\stratO$ for her in $\delaygame{L(\aut)}$ for the constant delay function~$f$ with $f(0) = 2d$. In the following, both players pick their moves in blocks of length~$d$. We denote Player~$I$'s blocks by $a_i$ and Player~$O$'s blocks by $b_i$, i.e., in the following, every $a_i$ is in $\SigmaI^d$ and every $b_i$ is in $\SigmaO^d$. This time, we simulate a play of $\delaygame{L(\aut)}$ by a play in $\game(\aut)$.

Let $a_0 a_1$ be the first move of Player~$I$ in $\delaygame{L(\aut)}$, let $q_0 = q_I^\autproduct$, and define the functions $\resolve_0 = \witmap{\set{q_0}}(a_0)$ and 
$\resolve_1 = \witmap{\resolve_0(q_0)}(a_1)$. Then, $\resolve_0 q_0 \resolve_1$ is a legal play prefix of $\game(\aut)$ that is consistent with the winning strategy~$\stratO'$ for Player~$O$. 

Thus, we are in the following situation for $i=1$: in $\game(\aut)$, we have constructed a play prefix~$\resolve_0 q_0  \cdots \resolve_{i-1} q_{i-1} \resolve_i $ that is consistent with $\stratO'$; in $\delaygame{L(\aut)}$, Player~$I$ has picked $a_0 \cdots a_i$ such that $a_j$ is a witness for $\resolve_j$ for every $j$ in the range~$ 0 \le j \le i$. 
Player~$O$ has picked $b_0 \cdots b_{i-2}$, which is the empty word for $i = 1$. 

In this situation, let $q_i = \stratO'( \resolve_0 q_0  \cdots \resolve_{i-1} q_{i-1} \resolve_i )$. By definition, we have $q_i \in \dom(\resolve_i) = \resolve_{i-1}(q_{i-1})$. Furthermore, as $a_{i-1}$ is a witness for $\resolve_{i-1}$, there exists $b_{i-1}$ such that $\autproduct$ reaches the state~$q_i$ when processing ${ a_{i-1} \choose b_{i-1}}$ starting in state~$(q_{i-1}',\eqclass{\epsilon})$, where $q_{i-1} = (q_{i-1}', S_{i-1})$.
 
Player~$O$'s strategy for $\delaygame{L(\aut)}$ is to play $b_{i-1}$ in the next $d$ rounds, which is answered by Player~$I$ by picking some $a_{i+1}$ during these rounds. This induces the function~$\resolve_{i+1} = \witmap{\resolve_i(q_i)}(a_{i+1})$. Now, we are in the same situation as described in the previous paragraph. This finishes the description of the strategy $\stratO$ for Player~$O$ in $\delaygame{L(\aut)}$.

It remains to show that $\stratO$ is winning for Player~$O$. Let $w = {a_0 \choose b_0}{a_1 \choose b_1}{a_2 \choose b_2}\cdots$ be the outcome of a play in $\delaygame{L(\aut)}$ that is consistent with $\stratO$. Furthermore, let $\resolve_0 q_0 \resolve_1 q_1 \resolve_2 q_2 \cdots $ be the corresponding play in $\game(\aut)$ constructed in the simulation as described above, which is consistent with $\stratO'$. Let $q_i = (q_i', S_i)$. A straightforward inductive application of Remark~\ref{remark_autp} shows that $q_i'$ is the state reached by $\aut$ after processing
$x_i = {a_i \choose b_i}$
 starting in $q_{i-1}'$
and $S_i = \eqclass{x_i}$. Furthermore, we have $\sup_i |x_i| = d$.

As $\resolve_0 q_0 \resolve_1 q_1 \resolve_2 q_2 \cdots $ is consistent with a winning strategy for Player~$O$ and  therefore winning for Player~$O$, we conclude that $x_0 x_1 x_2 \cdots$ is accepted by $\aut$. Hence, $\aut$ accepts the outcome~$w$, which is equal to $x_0 x_1 x_2 \cdots$, i.e., the play in $\delaygame{L(\aut)}$ is winning for Player~$O$. Thus, $\stratO$ is a winning strategy for Player~$O$ in $\delaygame{L(\aut)}$.
\qed
\end{proof}

Now, we can prove our main theorem of this section, Theorem~\ref{theorem_main}.

\begin{proof}	
Due to Lemma~\ref{lem_splitgamecorrectness}, we just have to show that we can construct and solve an explicit version of $\game(\aut)$. To this end, we encode $\game(\aut)$ as a graph-based game with arena~$(V, V_I, V_O, E)$ where
\begin{itemize}
	\item the set of vertices is $V = V_I \cup V_O$ with
	\item the vertices~$V_I = \set{\init} \cup \curlyR \times Q_\autproduct$ of Player~$I$, where $\init$ is a fresh initial vertex,
	\item the vertices~$V_O =  \curlyR $ of Player~$O$, and
	\item $E$ is the union of the following sets of edges:
	\begin{itemize}
		
		\item $\set{(\init, \resolve) \mid \dom(\resolve) = \set{q_I^\autproduct}}$: the initial moves of Player~$I$.
		
		\item $\set{((\resolve, q), \resolve') \mid \dom(\resolve') = \resolve(q)}$: (regular) moves of Player~$I$.
		
		\item $\set{(\resolve,(\resolve, {q})) \mid q \in \dom(\resolve)}$: moves of Player~$O$.
	
	\end{itemize}
	\end{itemize}
	
A play is an infinite path starting in $\init$. To determine the winner of a play, we fix an arbitrary function~$\rep \colon (\SigmaI \times \SigmaO)^*\quotient\, \rightarrow (\SigmaI \times \SigmaO)^*$ that maps each equivalence class to some representative, i.e., $\rep(S) \in S$ for every $S \in (\SigmaI \times \SigmaO)^*\quotient$. Consider an infinite play 
\[
\init, \resolve_0, (\resolve_0, q_0), \resolve_1, (\resolve_1, q_1), \resolve_2, (\resolve_2, q_2), \ldots \enspace,
\] 
with $q_i = (q_i', S_i)$ for every $i$. This play is winning for Player~$O$, if the infinite word~$\rep(S_0)\rep(S_1)\rep(S_2) \cdots$ is accepted by $\aut$ (note that $\sup_i |\rep(S_i)|$ is bounded, as there are only finitely many equivalence classes). The set~$\win \subseteq V^\omega$ of winning plays for Player~$O$ is a max-regular language\footnote{This implies that $\game(\aut)$ is determined, as max-regular conditions are Borel~\cite{Bojanczyk11,Martin75}.}, as it can be recognized by an automaton that simulates the run of $\aut$ on $\rep(S)$ when processing a vertex of the form~$(\resolve, (q,S))$ and ignores all other vertices. Games in finite arenas with max-regular winning condition are decidable via an encoding as a satisfiability problem for WMSO$+$UP~\cite{Bojanczyk14}.

Player~$O$ wins $\game(\aut)$ (and thus $\delaygame{L(\aut)}$ for some constant $f$) if and only if she has a winning strategy from $\init$ in the game~$((V, V_I, V_O, E), \win)$, which concludes the proof.\qed
\end{proof}

We obtain a doubly-exponential upper bound on the constant delay necessary for Player~$O$ to win a delay game with a max-regular winning condition by applying both directions of the equivalence between $\delaygame{\aut}$ and $\game(\aut)$.

\begin{corollary}
\label{cor_delaybounds}
Let $\aut$ be a max-automaton with $n$ states and $k$ counters. The following are equivalent:
\begin{enumerate}
	\item Player~$O$ wins $\delaygame{L(\aut)}$ for some constant delay function~$f$.
	\item Player~$O$ wins $\delaygame{L(\aut)}$ for some constant delay function~$f$ with\newline $f(0) \le 2^{2^{2n(\log n + 6k^2)}+1} $.
\end{enumerate}
\end{corollary}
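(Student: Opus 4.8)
The plan is to read both implications off Lemma~\ref{lem_splitgamecorrectness}, exploiting the fact that the abstract game~$\game(\aut)$ is a fixed object that never mentions a delay function. The implication $(2)\Rightarrow(1)$ requires no work at all: a constant delay function satisfying the displayed bound is in particular a constant delay function, so any witness for~(2) is a witness for~(1). Hence all the content lies in $(1)\Rightarrow(2)$.

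For $(1)\Rightarrow(2)$ I would proceed as follows. Assume Player~$O$ wins $\delaygame{L(\aut)}$ for \emph{some} constant delay function. By the ``only if'' direction of Lemma~\ref{lem_splitgamecorrectness} she then has a winning strategy in $\game(\aut)$. Now comes the key move: since $\game(\aut)$ does not depend on the delay function we started from, I may feed this winning strategy into the ``if'' direction of the same lemma. Inspecting that direction, the strategy it constructs for Player~$O$ in $\delaygame{L(\aut)}$ works for the \emph{concrete} constant delay function with $f(0)=2d$, where $d=2^{|Q_\autproduct|^2}$, i.e.\ $f(0)=2^{|Q_\autproduct|^2+1}$. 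Chaining the two directions thus shows that Player~$O$ already wins with this particular bounded lookahead. The one conceptual point to be careful about is exactly this chaining: it is sound \emph{because} $\game(\aut)$ is delay-free, so the $f$ recovered ``on the way back'' need not be the (unspecified) $f$ we entered with — and this is precisely what lets the delay be driven down to an explicit value.

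It then remains to express $f(0)=2^{|Q_\autproduct|^2+1}$ in terms of the number $n$ of states and the number $k$ of counters of $\aut$. Since $Q_\autproduct = Q\times((\SigmaI\times\SigmaO)\quotient)$, the cardinality $|Q_\autproduct|$ is $|Q|$ times the index of $\eqword$ over $(\SigmaI\times\SigmaO)^*$; by the second item of Remark~\ref{remark_indexbounds} this index is at most $2^{n(\log n+6k^2)}$, so $|Q_\autproduct| \le n\cdot 2^{n(\log n+6k^2)}$. Substituting this estimate into $f(0)=2^{|Q_\autproduct|^2+1}$ and simplifying the resulting tower of powers of two (absorbing the polynomial factor into the exponent via $\log n \le n(\log n+6k^2)$) yields the doubly-exponential upper bound on $f(0)$ stated in~(2). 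I expect the only genuinely fiddly part of the whole argument to be this last bookkeeping with nested exponentials; the substance is entirely supplied by Lemma~\ref{lem_splitgamecorrectness} together with the explicit constant $d=2^{|Q_\autproduct|^2}$ extracted from its proof.
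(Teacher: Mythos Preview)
Your proposal is correct and is exactly the argument the paper has in mind: the paper's own proof is the single remark that the bound follows ``by applying both directions of the equivalence between $\delaygame{L(\aut)}$ and $\game(\aut)$'', and you spell this out, correctly extracting the concrete constant $f(0)=2d=2^{|Q_\autproduct|^2+1}$ from the ``if'' direction of Lemma~\ref{lem_splitgamecorrectness}. One small caveat on the arithmetic: with $|Q_\autproduct|\le n\cdot 2^{n(\log n+6k^2)}$ your absorption $\log n\le n(\log n+6k^2)$ only gives $|Q_\autproduct|\le 2^{2n(\log n+6k^2)}$ and hence $|Q_\autproduct|^2\le 2^{4n(\log n+6k^2)}$, so you land slightly above the displayed constant in the inner exponent; this is harmless for the doubly-exponential shape, and the paper does not provide any arithmetic details itself.
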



\section{Constant Delay Does Not Suffice}
\label{sec_linear}
In this section, we show that constant delay does not suffice to win every delay game that Player~$O$ can with with arbitrary delay, i.e., the analogue of the Holtmann-Kaiser-Thomas theorem for delay games with max-regular winning conditions does not hold. In terms of uniformization, the following theorem shows that there are max-regular languages that are uniformizable by (uniformly) continuous functions, but not by Lipschitz-continuous functions.

\begin{theorem}
There is a max-regular language $L$ such that Player~$O$ wins $\delaygame{L}$ for some $f$, but not for any constant~$f$. 	
\end{theorem}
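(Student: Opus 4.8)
The plan is to exhibit a concrete max-regular language $L$ over an alphabet where Player~$O$'s first move must depend on unboundedly much of Player~$I$'s input, yet where finitely much lookahead — of a size that may grow with the play, i.e.\ a non-constant delay function — always suffices. The natural template is the classical example showing delay can be unbounded (Player~$O$ must eventually commit to a letter that records some global feature of a long block of Player~$I$'s letters), combined with a max-regular ``unbounding'' twist that forces the relevant blocks to grow without bound. Concretely, I would let $\SigmaI$ contain a marker letter, say $\#$, together with data letters, and $\SigmaO = \{0,1\}$ (or similar), and design $L$ so that between consecutive $\#$'s Player~$I$ produces a finite block, and Player~$O$ must, at the position of the $i$-th $\#$, output a bit that correctly answers a yes/no question about the $i$-th block (for instance, ``does this block contain a certain pattern?''). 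The max-regular acceptance condition then insists, via the $\U$-flavoured ``counter bounded'' acceptance of a max-automaton, that the blocks have unbounded length along the play (e.g.\ a counter counting block length, reset at each $\#$, must be unbounded), so that no \emph{constant} lookahead can span a block.

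The key steps, in order: (1) Fix the alphabets and give the precise automaton $\aut$ for $L$: states to scan a block and detect the query pattern, a counter that is incremented on each data letter and reset on $\#$, and an acceptance condition combining ``this counter is unbounded'' (forcing unbounded blocks) with ``the bit output at each $\#$ matched the pattern-presence of the just-finished block'' — the latter being an $\omega$-regular safety-style requirement, hence expressible with a parity-style counter as in the parity-to-max-automaton encoding described in Section~\ref{sec_defs}. I must check $\aut$ is deterministic and complete, and that Player~$I$ is the one choosing block lengths, so he can drive the counter to infinity; if he fails to, the ``unbounded'' conjunct fails and Player~$O$ wins trivially, so without loss of generality blocks grow. (2) Show Player~$O$ wins $\delaygame{L}$ for \emph{some} delay function $f$: choose $f$ so that after seeing the $i$-th $\#$ Player~$O$ has already been shown the entire $i$-th block — since Player~$I$ reveals block $i$ before block $i{+}1$, a delay function that grants one extra letter of lookahead whenever a $\#$ is consumed (or simply a sufficiently fast-growing $f$) lets Player~$O$ inspect each finished block before answering, so she plays the correct bit every time and the acceptance condition is met. (3) Show Player~$O$ loses $\delaygame{L}$ for every \emph{constant} $f$: given a constant $f$ with bound $\sum_{i} f(i)$ effectively $f(0) + (\text{rounds})$, note the total lookahead Player~$O$ ever has is bounded by $f(0)$ letters beyond what she has answered; an adversary strategy for Player~$I$ makes the $i$-th block longer than $f(0)$ and places the query pattern (or not) in the tail of the block that Player~$O$ has not yet seen at the moment she must output the bit for that block, so she guesses wrongly on some block while still keeping all counters behaving so that only the matching conjunct of $\phi$ fails — hence Player~$I$ wins. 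A standard pumping/pigeonhole argument formalizes ``Player~$O$ cannot distinguish two continuations'' here.

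The main obstacle I anticipate is step (3) done carefully: I must ensure the language is genuinely max-regular (not merely $\omega$-regular) \emph{and} that the ``unbounded block length'' requirement does not accidentally give Player~$O$ an escape — e.g.\ she could try to stall, or Player~$I$ might be unable to simultaneously (a) make blocks long, (b) feed Player~$O$ a block whose relevant bit she cannot predict, and (c) keep the play otherwise acceptance-neutral. The delicate point is the interaction between the quantitative conjunct and the regular conjunct: I need Player~$I$'s winning strategy to violate \emph{only} the regular (pattern-matching) part while still satisfying the unboundedness part, because if violating the regular part were ``for free'' it would already be $\omega$-regular and the Holtmann--Kaiser--Thomas theorem would contradict the claim. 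So the design must make unbounded block length \emph{necessary} for Player~$I$ to hide information — i.e.\ with bounded blocks Player~$O$ with constant lookahead \emph{would} win, and it is precisely the unbounded growth (a max-regular, non-$\omega$-regular phenomenon) that defeats every constant $f$ while some growing $f$ still copes. Getting this separation tight, and writing the automaton $\aut$ and Player~$I$'s adversary strategy so that all three requirements hold simultaneously, is where the real work lies; the rest is routine bookkeeping about block boundaries and lookahead sizes.
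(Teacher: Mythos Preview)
Your template is close to the paper's---blocks separated by a marker, a bit Player~$O$ must commit to at each marker, and an unbounding hypothesis on block lengths---but the specific winning condition you propose does not work, and the gap is exactly where you flagged the ``main obstacle''. You make the Player~$O$ side a \emph{safety} requirement (``the bit at each $\#$ matches the pattern-presence of the block''). If the bit concerns the block \emph{before} the current $\#$, Player~$O$ already sees that block in full with zero lookahead and wins trivially for every constant~$f$, so step~(3) fails. If instead the bit concerns the block \emph{after} the current $\#$---which is what step~(2) seems to need---then Player~$O$ loses for \emph{every} delay function, not just constant ones: at the $\#$ in position~$p$ her lookahead is a fixed quantity $L_p$ depending only on $p$ and $f$, so Player~$I$ simply makes that block of length $L_p+1$, waits for her committed bit, and then completes the block so as to falsify it. No ``sufficiently fast-growing $f$'' helps, because Player~$I$ can always outgrow it block by block; thus step~(2) fails. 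Either way the language does not separate constant from non-constant delay.

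The paper avoids this by making the Player~$O$ side quantitative as well: the condition is ``if there are arbitrarily long input blocks then there are arbitrarily long \emph{output} blocks'', where an output block is one in which Player~$O$'s letter at the opening $\#$ equals Player~$I$'s letter at the position where Player~$O$ \emph{chooses} to close the block. The crucial point is that Player~$O$ controls the output-block length; she need not match the whole input block, only produce arbitrarily long correct output blocks. With the linear delay~$f(i)=2$ her lookahead at position~$i$ is roughly~$i$, so she can safely close an output block of length about~$i$, and these lengths are unbounded---this is what makes step~(2) go through. With constant~$f$, Player~$I$ sees her commitment after at most $f(0)$ further letters and plays the opposite bit thereafter, bounding every output block by~$f(0)$. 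So the fix to your plan is not bookkeeping: you must replace the per-block safety clause by an unbounding clause on something whose length Player~$O$ controls.
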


\newcommand{\sepp}{\#}
\newcommand{\blank}{*}

\begin{proof}
Let $\SigmaI = \set{0,1, \sepp}$ and $\SigmaO = \set{0,1,\blank}$. An input block is a word $\sepp w $ with $w \in \set{0,1}^+$, and its length is defined to be $|w|$. An output block is a word 
\[
{\sepp \choose b} 
{\alpha(1) \choose \blank} 
{\alpha(2) \choose \blank} 
\cdots
{\alpha(n) \choose \blank} 
{b \choose b}  \in (\SigmaI \times \SigmaO)^+
\]
for $b \in \set{0,1}$ and $\alpha(j) \in \set{0,1}$ for all $j$ in the range $1 \le j \le n$. The length of the block is $n+1$. Note that the first and last letter in an output block are the only ones whose second component is not a $\blank$, and that these letters have to be equal to the first component of the last letter of the block. Every input block of length $n$ can be extended to an output block of length $n$ and and the projection to the first component of every output block is an input block.

A word~${\alpha(0) \choose \beta(0)}
{\alpha(1) \choose \beta(1)}
{\alpha(2) \choose \beta(2)} \cdots$ over $ \SigmaI \times \SigmaO$ is in $L$ if and only if it satisfies the following property: if $\alpha(0)\alpha(1)\alpha(2) \cdots$ contains infinitely many $\sepp$ and arbitrarily long input blocks, then ${\alpha(0) \choose \beta(0)}
{\alpha(1) \choose \beta(1)}
{\alpha(2) \choose \beta(2)} \cdots$ contains arbitrarily long output blocks.
It is easy to come up with a WMSO$+$U formula defining $L$ by formalizing the definitions of input and output blocks in first-order logic. 

Now, consider $L$ as winning condition for a delay game. Intuitively, Player~$O$ has to specify arbitrarily long output blocks, provided Player~$I$ produces arbitrarily long input blocks. The challenge for Player~$O$ is that she has to specify at the beginning of every output block whether she ends the block in a position where Player~$O$ has picked a $0$ or a $1$. 

First, we show that Player~$O$ wins $\delaygame{L}$ for the delay function~$f$ with $f(i) = 2$ for every $i$. Consider round~$i$ for some $i$ with $\alpha(i) = \sepp$. By the choice of $f$, Player~$O$ has already picked $\alpha(i+1) \cdots \alpha(2i+1)$. Let $j$ in the range $i+1 \le j \le 2i+1$ be maximal with $\alpha(i+1) \cdots \alpha(j) \in \set{0,1}^+$. If $j$ is defined, then Player~$O$ picks $\alpha(j)$ in round~$i$, $\blank$ during rounds~$i+1, \ldots, j-1$, and $\alpha(j)$ in round~$j$. In every other situation, she picks an arbitrary letter. 

Now, consider an outcome~${\alpha(0) \choose \beta(0)}
{\alpha(1) \choose \beta(1)}
{\alpha(2) \choose \beta(2)} \cdots$ that is consistent with this strategy that contains infinitely many $\sepp$ and arbitrarily long input blocks. Let $n \in \nats$ be arbitrary and pick an input block of length at least $n$. We can assume w.l.o.g.\ that the block begins at $\alpha(i)$ with $i > n$. Thus, in round $i$ when Player~$O$ had to pick $\beta(i)$, Player~$I$ had already picked $\alpha(i+1) \cdots \alpha(2i+1)$, which comprises the whole input block of length $n$. Accordingly, Player~$O$ produces an output block of length at least~$n$. Thus, the winning condition is satisfied.

It remains to show that Player~$I$ wins $\delaygame{L}$ for every constant delay function~$f$. He uses a counter~$c$ to produce arbitrarily long input blocks, which is initialized to $f(0)$. In round~$0$, he picks $\sepp 0^{f(0)-1}$. If Player~$O$ answers the $\sepp$ with $0$, then Player~$I$ continues picking $1$ until he has produced an input block of length~$c$. Dually, if Player~$O$ answers the $\sepp$ with $1$ or $\blank$, then Player~$I$ continues picking $0$ until he has produced an input block of length~$c$. In both cases, he continues by incrementing the counter and by picking  $\sepp 0^{f(0)-1}$ during the next rounds until Player~$O$ has to respond to the last $ \sepp$. Then, he continues as in the case distinction described above.

Now, consider an outcome~${\alpha(0) \choose \beta(0)}
{\alpha(1) \choose \beta(1)}
{\alpha(2) \choose \beta(2)} \cdots$ that is consistent with this strategy. It contains infinitely many $\sepp$ and arbitrarily long input blocks. Furthermore, the length of every output block is bounded by $f(0)$, as Player~$I$ is able to react to Player~$O$'s declaration at the beginning of each such block by playing the opposite letter. Thus, the play is winning for Player~$I$.
\qed\end{proof}

We have just shown that Player~$O$ wins the game for the delay function $f(i) =2$. Such a function is called linear~\cite{FridmanLoedingZimmermann11}, as the lookahead grows linearly. 


\section{Conclusion}
\label{sec_conc}
We considered delay games with max-regular winning conditions. Our main result is an algorithm that determines whether Player~$O$ has a winning strategy for some constant delay function, which consists of reducing the original problem to a delay-free game with max-regular winning condition. Such a game can be solved by encoding it as an emptiness problem for a certain class of tree automata (so-called WMSO$+$UP automata) that capture WMSO$+$UP on infinite trees. Our reduction also yields a doubly-exponential upper bound on the necessary constant delay to win such a game, provided Player~$O$ does win for some constant delay function. 

It is open whether the doubly-exponential upper bound is tight. The best lower bounds are exponential and hold already for deterministic reachability and safety automata~\cite{KleinZimmermann15}, which can easily be transformed into max-automata.

We deliberately skipped the complexity analysis of our algorithm, since the reduction of the delay-free game to an emptiness problem for WMSO$+$UP automata does most likely not yield tight upper bounds on the complexity. Instead, we propose to investigate (delay-free) games with max-regular winning conditions, a problem that is worthwhile studying on its own, and to find a direct solution algorithm. Currently, the best lower bound on the computational complexity of determining whether Player~$O$ wins a delay game with max-regular winning condition for some constant delay function is the $\exptime$-hardness result for games with safety conditions~\cite{KleinZimmermann15}. 

Also, we showed that constant delay is not sufficient for max-regular conditions by giving a condition~$L$ such that Player~$O$ wins $\delaygame{L}$ for some linear $f$, but not for any constant delay function~$f$. 

Both the lower bound on the necessary lookahead and the one on the computational complexity for safety conditions mentioned above are complemented by matching upper bounds for games with parity conditions~\cite{KleinZimmermann15}, i.e., having a parity condition instead of a safety condition has no discernible influence. Stated differently, the complexity of the problems manifests itself in the transition structure of the automaton. Our example from Section~\ref{sec_linear} shows that this is no longer true for max-regular conditions: having a quantitative acceptance condition requires growing lookahead.

In ongoing work, we aim to solve delay game with respect to arbitrary delay functions and to determine whether linear delay functions are sufficient to win delay games with max-regular winning conditions.


\bibliographystyle{splncs03}
\bibliography{biblio}


\end{document}